\newtheorem{example}{Example}[section]
\newtheorem{definition}{Definition}[section]
\newtheorem{lemma}{Lemma}[section]
\newtheorem{theorem}{Theorem}[section]
\newtheorem{corollary}{Corollary}[section]
\newenvironment{proof}{\vspace{8pt}
\noindent{\bf Proof}: }{{\hfill {\large $\Box$}} \vspace{8pt}}
\begin{document}

\title{On the Eigenvalues of Certain Matrices Over $\mathbb{Z}_m$}

\author{ Liang Feng Zhang\\Nanyang Technological University\\s070006@e.ntu.edu.sg
}

\date{}

\maketitle

\begin{abstract}
Let $m,n>1$ be integers and $\mathbb{P}_{n,m}$ be the point set of the projective $(n-1)$-space (defined by [2]) over the ring $\mathbb{Z}_m$of integers modulo $m$. Let $A_{n,m}=(a_{uv})$ be the matrix with rows and columns
being labeled by elements of $\mathbb{P}_{n,m}$, where $a_{uv}=1$ if the inner product $\langle u,v \rangle=0$ and $a_{uv}=0$ otherwise. Let
$B_{n,m}=A_{n,m}A_{n,m}^t$. The eigenvalues of $B_{n,m}$ have been studied by [1, 2, 3], where their applications
in the study of expanders and locally decodable codes were described. In this paper, we completely
determine the eigenvalues of $B_{n,m}$ for general integers $m$ and $n$.

\end{abstract}

\section{Introduction}

Let $m>1$ be  an integer. Let $\mathbb{Z}_m$ be the ring of integers modulo $m$ and
$\mathbb{Z}_m^*$ be the group of units of $\mathbb{Z}_m$.
Let $n>1$ be an integer and
$\mathbb{Z}_m^n$ be the set of $n$-tuples with entries in $\mathbb{Z}_m$.
We say that $u, v\in \mathbb{Z}_m^n$	 are equivalent
(and write $u\sim v$) if
 there is a $\lambda\in \mathbb{Z}_m^*$ such that
$u_i=\lambda v_i$
for every $i\in [n]$.
If $u$ is not equivalent to $v$, we write $u\not \sim v$.
Let
$
\mathbb{S}_{n,m}=\{u\in \mathbb{Z}_m^n: \gcd(u_1,u_2,\ldots, u_n,m)=1\}
$
and
\begin{equation}
\mathbb{P}_{n,m}\triangleq \mathbb{S}_{n,m}/\sim
\end{equation}
be the set of equivalence classes of elements of $\mathbb{S}_{n,m}$ under $\sim$.
Let $A_{n,m}=(a_{uv})$ be the matrix with rows and columns being labeled by
elements of $\mathbb{P}_{n,m}$, where
$a_{uv}=1$ if the inner product $\langle u,v \rangle=0$ and $a_{uv}=0$ otherwise.
Let  $B_{n,m}=A_{n,m}A_{n,m}^t$.
For every $u\in \mathbb{P}_{n,m}$, let  $N(u)=\{v\in \mathbb{P}_{n,m}: \langle u,v \rangle=0 \}$  be the
neighborhood of $u$.
Let  $\theta_{n,m}=|\mathbb{P}_{n,m}|$.
Chee et al.~\cite{CL93} showed that
\begin{equation}
\label{eqn:theta}
\theta_{n,m}=m^{n-1}\prod_{p|m}
\left(1+\frac{1}{p}+\cdots+\frac{1}{p^{n-1}}\right)
\end{equation}
and $|N(u)|=\theta_{n-1,m}$
for every $u\in \mathbb{P}_{n,m}$.

Let $a,b$ be  positive integers. We denote by
 $I_a$  the  identity  matrix of order $a$. We denote by  $O_{a \times b}$ and
 $J_{a\times b}$  the $a\times b$  all-zero and all-one matrices, respectively. In particular, we  write
$O_a$ and $J_a$ when $a=b$, and
 write $I,~O$ and $J$ when $a,b$ are obvious.
When $m$ is a prime, Alon \cite{Alo86} showed that $B_{n,m}$ has two distinct eigenvalues
$\theta_{n-1,m}^2$  and
$m^{n-2}$.
The eigenvectors  with eigenvalue $\theta_{n-1,m}^2$ is the single column of $J_{l\times 1}$ and the eigenvectors   with
eigenvalue $m^{n-2}$ are the columns of the matrix
\begin{equation}
\label{eqn:K_d}
R_d=
\begin{pmatrix}
I_{d}\\
-J_{1\times d}
\end{pmatrix},
\end{equation}
where $l=\theta_{n,m}$ and $d=\theta_{n,m}-1$.
When $m=pq$ for two distinct primes, the eigenvalues of
$B_{n,m}$ have been determined by Chee et al. \cite{Zhang12}
(see Lemma 3.2), which have improtant applications in the study of matching families
in $\mathbb{Z}_m^n$.  Our work in this paper is mainly motivated by
Chee et al. \cite{Zhang12}  for its potential applications  in the study of matching families.

\section{Results}

In this paper, we completely determine the eigenvalues of $B_{n,m}$ for any positive integer $m$.
First of all, we deal with the prime power case and show the following theorem.

\begin{theorem}\label{pro:eig_B_p_e_n_result}
{\em (Prime Power Case)}
Let $m=p^e$ for a prime $p$  and positive integers $e$ and $n$. Then the eigenvalues of
$B_{n,m}$ and their multiplicities are as follows:
\begin{table}[H]
\begin{center}
\em
\doublespacing
     \begin{tabular}{   | l | l |}
     \hline
     Eigenvalue & Multiplicity  \\ \hline
     $\lambda_1 = p^{2(e-1)(n-2)}\cdot \theta_{n-1,p}^2 $ & $d_1 = 1$  \\ \hline
$\lambda_2 =   p^{(2e-1)(n-2)}$ & $d_2 =  \theta_{n,p}-1$   \\
\hline
  $\lambda_s=  p^{(2e+1-s)(n-2)}$ & $d_s=(p^{n-1}-1)\theta_{n,p^{s-2}}$    \\ \hline
     \end{tabular}
\caption{The eigenvalues of the matrix $B_{n,m}$}
\label{tab:eig}
 \end{center}
\end{table}
\end{theorem}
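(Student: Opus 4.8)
The plan is to first pass from $B_{n,m}$ to $A_{n,m}$. Because the inner product is symmetric and invariant under unit scaling, $A_{n,m}$ is a well-defined \emph{symmetric} $0/1$ matrix, so $B_{n,m}=A_{n,m}A_{n,m}^{t}=A_{n,m}^{2}$. Thus $A_{n,m}$ and $B_{n,m}$ share eigenvectors and the eigenvalues of $B_{n,m}$ are the squares of those of $A_{n,m}$. In particular the all-ones vector is an eigenvector of $A_{n,m}$ with eigenvalue $|N(u)|=\theta_{n-1,p^{e}}$, and since $\theta_{n-1,p^{e}}=p^{(e-1)(n-2)}\theta_{n-1,p}$ its square equals $\lambda_{1}$; this already accounts for the multiplicity-one eigenvalue. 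It remains to diagonalise $A_{n,p^{e}}$ and square the result.

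The central device is the tower of reductions. Reduction modulo $p^{j}$ gives a surjection $\rho^{(j)}:\mathbb{P}_{n,p^{e}}\to\mathbb{P}_{n,p^{j}}$ (well defined on classes, since units map to units), and the one-step map $\mathbb{P}_{n,p^{j}}\to\mathbb{P}_{n,p^{j-1}}$ has fibres of size $p^{n-1}$. Let $V_{j}\subseteq\mathbb{C}^{\mathbb{P}_{n,p^{e}}}$ be the functions constant on the fibres of $\rho^{(j)}$, so that $V_{1}\subset V_{2}\subset\cdots\subset V_{e}=\mathbb{C}^{\mathbb{P}_{n,p^{e}}}$ with $\dim V_{j}=\theta_{n,p^{j}}$; set $W_{1}=V_{1}$ and $W_{j}=V_{j}\ominus V_{j-1}$ for $j\ge 2$, whence $\dim W_{j}=(p^{n-1}-1)\theta_{n,p^{j-1}}$. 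The key structural claim is that each $V_{j}$ is invariant under $A_{n,p^{e}}$. To establish this I would write $A_{n,p^{e}}$ in block form along the one-step reduction $\rho:\mathbb{P}_{n,p^{e}}\to\mathbb{P}_{n,p^{e-1}}$: a block $[\bar u,\bar v]$ vanishes unless $\bar u,\bar v$ are orthogonal at level $e-1$, and for orthogonal $\bar u,\bar v$ the expansion $\langle u,v\rangle\equiv p^{e-1}\bigl(c_{0}+\langle\bar{\bar u},b\rangle+\langle a,\bar{\bar v}\rangle\bigr)\pmod{p^{e}}$, where $a,b$ parametrise the two fibres and $c_{0}\in\mathbb{F}_{p}$ is the residue of $\langle\cdot,\cdot\rangle/p^{e-1}$, shows each nonzero block is governed by two $\mathbb{F}_{p}$-linear functionals and the constant $c_{0}$. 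Classifying by whether $\bar u,\bar v$ are parallel modulo $p$, the non-parallel blocks are incidence matrices that reorder to $I_{p}\otimes J_{p^{n-2}}$, while the parallel blocks are constant ($J$ or $O$, according to $c_{0}$).

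From these block facts I would extract the spectrum. The non-parallel blocks have constant row and column sums $p^{n-2}$ and, being reorderings of $I_{p}\otimes J_{p^{n-2}}$, act only through coset averages; the parallel blocks couple only points with equal image in $\mathbb{P}_{n,p}$. Together these should yield that $A_{n,p^{e}}$ respects the filtration and acts on each graded piece $W_{j}$ ($2\le j\le e$) as $p^{n-2}$ times a scalar reflection, so that $B_{n,p^{e}}=A_{n,p^{e}}^{2}$ acts on $W_{j}$ as the scalar $\lambda_{j+1}=p^{(2e-j)(n-2)}$ with multiplicity $\dim W_{j}=(p^{n-1}-1)\theta_{n,p^{j-1}}=d_{j+1}$. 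On the bottom piece $W_{1}\cong\mathbb{C}^{\mathbb{P}_{n,p}}$ the parallel blocks assemble, after $e-1$ reductions each contributing a factor $p^{n-2}$, into $p^{2(e-1)(n-2)}$ times the prime-case matrix $B_{n,p}$; Alon's result \cite{Alo86} then supplies eigenvalues $p^{2(e-1)(n-2)}\theta_{n-1,p}^{2}=\lambda_{1}$ (multiplicity $1$) and $p^{2(e-1)(n-2)}p^{n-2}=\lambda_{2}$ (multiplicity $\theta_{n,p}-1$). Reindexing by $s=j+1$ reproduces Table~\ref{tab:eig}, and $\dim W_{1}+\sum_{j\ge 2}\dim W_{j}=\theta_{n,p}+\sum_{j=2}^{e}(p^{n-1}-1)\theta_{n,p^{j-1}}=\theta_{n,p^{e}}$ confirms that the multiplicities sum correctly.

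I expect the hard part to be the two invariance/scalar statements for the graded pieces: proving rigorously that $A_{n,p^{e}}$ respects the filtration $\{V_{j}\}$ and acts on each $W_{j}$ ($j\ge 2$) as a scalar reflection. This is exactly where the parallel-modulo-$p$ blocks must be controlled, since they simultaneously reproduce the prime-level spectrum on $W_{1}$—and hence the anomalous multiplicity $\theta_{n,p}-1$ of $\lambda_{2}$, which does not match the generic formula $(p^{n-1}-1)\theta_{n,p^{s-2}}$ at $s=2$—and generate the potential cross terms between distinct $W_{j}$ that must be shown to cancel. Tracking the obstruction $c_{0}$ and the isotropic/non-isotropic dichotomy at the prime level through all $e-1$ reduction steps is the principal technical burden; the base case $e=1$ is Alon's theorem, after which the induction closes.
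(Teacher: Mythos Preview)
Your inductive skeleton (filter $\mathbb{C}^{\mathbb{P}_{n,p^e}}$ by the tower of reductions, invoke Alon on the bottom piece, pick up one new graded piece at each level) is exactly the paper's, and the observation $B_{n,m}=A_{n,m}^2$ is correct. But the block analysis of $A$ you sketch contains an error and a genuine gap. First, the error: when $\bar u\parallel\bar v\pmod p$ the block is \emph{not} constant. In the expansion $c_0+\langle\tilde u,b\rangle+\langle a,\tilde v\rangle$ both linear functionals remain nonzero on the fibre quotients (their kernels are $\tilde u^{\perp}$ and $\tilde v^{\perp}$, neither of which is all of $\mathbb{F}_p^{n}$), so the parallel block reorders to the \emph{same} $I_p\otimes J_{p^{n-2}}$ form as the non-parallel one; for instance the diagonal fibre block of $A_{3,4}$ over the isotropic point $(0,1,1)$ is $\bigl(\begin{smallmatrix}0&1\\1&0\end{smallmatrix}\bigr)\otimes J_{2}$, not $J$ or $O$. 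This correction actually helps you: every active block has constant row sum $p^{n-2}$, so $A_{n,p^e}|_{V_{e-1}}\cong p^{n-2}A_{n,p^{e-1}}$ follows with no case split, and squaring gives $B_{n,p^e}|_{V_{e-1}}\cong p^{2(n-2)}B_{n,p^{e-1}}$.

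The real gap is the claim that $A$ acts on each $W_j$ as ``$p^{n-2}$ times a scalar reflection''. Knowing the blocks of $A$ are individually isomorphic to $I_p\otimes J_{p^{n-2}}$ does not tell you that $A^2|_{W_e}$ is a \emph{scalar}, because the $I_p$-coordinates in different blocks are coordinatised by different functionals $\langle\tilde u,\cdot\rangle$ and do not align; summing them over $\bar v$ gives no obvious simplification on functions with zero fibre-sum. The paper sidesteps this entirely by working with $B$ rather than $A$: it first proves a closed formula for $b_{uv}$ depending only on $\nu_p(\xi)$ (Lemma~\ref{cor:B_entry}), then blocks $B$ by \emph{transversals} $K_1,\dots,K_{p^{n-1}}$ of the fibres rather than by the fibres themselves, and shows that every off-diagonal block is \emph{literally} equal to $p^{n-3}B_{n,p^{e-1}}-p^{(e-1)(n-2)-1}I$ while every diagonal block differs from this by $p^{e(n-2)}I$ (Lemma~\ref{lem:cab}). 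In that presentation $B_{n,p^e}\sim J_{p^{n-1}}\otimes C_{12}+p^{e(n-2)}I$, so the action on $W_e$ is transparently $p^{e(n-2)}I$ and the recursion $B|_{V_{e-1}}\cong p^{2(n-2)}B_{n,p^{e-1}}$ drops out simultaneously (Lemmas~\ref{lem:eigenvalue_recursive_formula}--\ref{lem:small_eig}). To close your argument you would need either that entrywise computation of $B$, or an irreducibility argument for $W_e$ under $\mathrm{GL}_n(\mathbb{Z}_{p^e})$ plus Schur's lemma; the $A$-block picture alone does not deliver it.
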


Theorem  \ref{pro:eig_B_p_e_n_result} makes it very easy to determine the eigenvalues of $B_{n,m}$ for a general integer $m$.
We define the  tensor product of two matrices  $A=(a_{ij})$ and $B$ to be the
 block matrix
\begin{equation}
A\otimes B=(a_{ij}\cdot B).
\end{equation}
We say that $A\sim B$ if $A$ can be obtained from $B$ by applying the same permutation
to the rows and columns. Clearly,
if $A\sim B$, then $A$ and $B$ have the same eigenvalues.
The following lemma allows us to determine the eigenvalues of
$B_{n,m}$ via  Theorem \ref{pro:eig_B_p_e_n_result}.
\begin{lemma}\label{thm:B_m_n_decomposition}
{\em (Tensor Lemma)}
Let $m=m_1\cdots m_r=p_1^{e_1}\cdots p_r^{e_r}$ for distinct primes $p_1,\ldots, p_r$ and
positive integers
$e_1,\ldots, e_r$,
where $m_s=p_s^{e_s}$ for every $s\in [r]$.
Then we have that
\begin{equation}
B_{n,m}\sim B_{n,m_1} \otimes \cdots \otimes B_{n,m_r}.
\end{equation}
\end{lemma}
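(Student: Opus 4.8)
The plan is to prove this via the Chinese Remainder Theorem, which gives a ring isomorphism $\mathbb{Z}_m \cong \mathbb{Z}_{m_1} \times \cdots \times \mathbb{Z}_{m_r}$. First I would establish a bijection between $\mathbb{P}_{n,m}$ and the Cartesian product $\mathbb{P}_{n,m_1} \times \cdots \times \mathbb{P}_{n,m_r}$. By CRT, each $u \in \mathbb{Z}_m^n$ corresponds to a tuple $(u^{(1)}, \ldots, u^{(r)})$ with $u^{(s)} \in \mathbb{Z}_{m_s}^n$. The key observations are: (i) the unimodularity condition $\gcd(u_1, \ldots, u_n, m) = 1$ holds iff $\gcd(u^{(s)}_1, \ldots, u^{(s)}_n, m_s) = 1$ for every $s$, so $\mathbb{S}_{n,m}$ maps onto $\prod_s \mathbb{S}_{n,m_s}$; and (ii) since a unit $\lambda \in \mathbb{Z}_m^*$ corresponds to a tuple of units $(\lambda^{(1)}, \ldots, \lambda^{(r)})$, the equivalence $u \sim v$ descends componentwise, so the scaling equivalence is compatible with the product structure. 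Together these give a well-defined bijection $\phi \colon \mathbb{P}_{n,m} \to \prod_s \mathbb{P}_{n,m_s}$.

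Next I would relate the inner products. For $u, v \in \mathbb{Z}_m^n$, the inner product $\langle u, v \rangle$ reduces modulo each $m_s$ to $\langle u^{(s)}, v^{(s)} \rangle$, so by CRT we have $\langle u, v \rangle = 0$ in $\mathbb{Z}_m$ if and only if $\langle u^{(s)}, v^{(s)} \rangle = 0$ in $\mathbb{Z}_{m_s}$ for \emph{every} $s \in [r]$. This is exactly the structure of a tensor product of adjacency matrices: under the bijection $\phi$, the entry $a_{uv}$ of $A_{n,m}$ equals $\prod_{s=1}^r a_{u^{(s)} v^{(s)}}$, which is the $(\phi(u), \phi(v))$ entry of $A_{n,m_1} \otimes \cdots \otimes A_{n,m_r}$ (after ordering the product index set in the natural lexicographic way). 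Hence $A_{n,m} \sim A_{n,m_1} \otimes \cdots \otimes A_{n,m_r}$, where the similarity is realized by the row/column permutation induced by $\phi$ together with the chosen ordering of the product set.

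Finally I would pass from $A$ to $B = AA^t$. Using the mixed-product property of the tensor product, $(X_1 \otimes \cdots \otimes X_r)(Y_1 \otimes \cdots \otimes Y_r) = (X_1 Y_1) \otimes \cdots \otimes (X_r Y_r)$, I would compute
\begin{equation}
(A_{n,m_1} \otimes \cdots \otimes A_{n,m_r})(A_{n,m_1} \otimes \cdots \otimes A_{n,m_r})^t = B_{n,m_1} \otimes \cdots \otimes B_{n,m_r},
\end{equation}
noting that $(X_1 \otimes \cdots \otimes X_r)^t = X_1^t \otimes \cdots \otimes X_r^t$. Since the relation $\sim$ (simultaneous row/column permutation by the same permutation matrix $P$) is preserved under forming $A A^t$, namely $(P A P^t)(P A P^t)^t = P (A A^t) P^t$, the similarity $A_{n,m} \sim A_{n,m_1} \otimes \cdots \otimes A_{n,m_r}$ yields $B_{n,m} \sim B_{n,m_1} \otimes \cdots \otimes B_{n,m_r}$, which is the claim.

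The main obstacle, and the step requiring the most care, is verifying that $\phi$ is genuinely well-defined on equivalence classes and bijective — in particular checking that the componentwise equivalences assemble correctly through CRT (that $u \sim v$ in $\mathbb{Z}_m^n$ is equivalent to $u^{(s)} \sim v^{(s)}$ for all $s$, using that $\mathbb{Z}_m^* \cong \prod_s \mathbb{Z}_{m_s}^*$). The cardinalities already match by the product formula for $\theta_{n,m}$ in equation~(\ref{eqn:theta}), so injectivity suffices for bijectivity; the remaining arguments about inner products and the tensor identities are routine once the index bijection is fixed.
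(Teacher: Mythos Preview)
Your proposal is correct and follows essentially the same approach as the paper: the paper first proves the two-factor case $B_{n,m}\sim B_{n,m_1}\otimes B_{n,m_2}$ for coprime $m_1,m_2$ (via the CRT bijection on $\mathbb{P}_{n,m}$, the componentwise inner-product criterion, the resulting similarity $A_{n,m}\sim A_{n,m_1}\otimes A_{n,m_2}$, and the mixed-product identity) and then obtains the $r$-factor statement by iteration, whereas you carry out the same CRT argument directly for all $r$ prime-power factors at once. The only cosmetic difference is that the paper's map goes from the product to $\mathbb{P}_{n,m}$ while yours goes the other way, which is immaterial.
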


 \begin{theorem}\label{pro:eig_B_m_n}
{\em (General Case)}
Let $m=m_1\cdots m_r=p_1^{e_1}\cdots p_r^{e_r}$ for distinct primes $p_1,\ldots, p_r$ and
positive integers
$e_1,\ldots, e_r$,
where $m_s=p_s^{e_s}$ for every $s\in [r]$.
Let $\lambda_{s}$ be an eigenvalue of $B_{n,m_s}$ of
multiplicity $d_{s}$ for every
$s\in[r]$. Then $\lambda_1\cdots\lambda_r$
is an  eigenvalue of  $B_{n,m}$ of multiplicity
$d_{1}\cdots d_{r}$.
\end{theorem}

\section{The Prime Power Case}\label{sec:spectrum}

In this section, we determine the eigenvalues of $B_{n,m}$ whenever $m$ is a prime power.

\subsection{Linear Equation Systems Over $\mathbb{Z}_{p^e}$}

Let $a,b,c,d \in \mathbb{Z}_{p^e}$ for a prime $p$ and an integer $e>0$.
We consider the following equation system
\begin{equation}\label{equ:order_2_equation_system}
\begin{pmatrix}
a & b\\
c & d
\end{pmatrix}
\begin{pmatrix}
x \\
y
\end{pmatrix}
\equiv
\begin{pmatrix}
0\\
0
\end{pmatrix}
\bmod p^e.
\end{equation}
Let $\mathcal{N}$ be the number of pairs $(x,y)\in \mathbb{Z}_{p^e}^2$ that satisfy (\ref{equ:order_2_equation_system}).
Let
\begin{equation}
N=\gcd(ad-bc, p^e\cdot \gcd(a,b,c,d,p^e)).
\end{equation}

\begin{lemma}\label{lem:order_2_equation_system}
We have that $\mathcal{N}=N$.
\end{lemma}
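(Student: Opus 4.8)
The plan is to prove $\mathcal{N}=N$ by first reducing to the case where the coefficient matrix has a unit entry, and then bootstrapping to the general case by peeling off the largest common power of $p$. Before the case analysis I would record two invariance facts. Swapping the two rows of the system (reordering the two equations) or swapping the two columns (relabelling $x\leftrightarrow y$) changes neither the solution count $\mathcal{N}$ nor the quantity $N$: such a swap fixes $\gcd(a,b,c,d,p^e)$ and changes $ad-bc$ only by sign, hence leaves $\gcd(ad-bc,\,p^e\gcd(a,b,c,d,p^e))$ untouched. Throughout I treat $a,b,c,d$ as integer representatives in $\{0,\dots,p^e-1\}$, so that $ad-bc$ is a genuine integer; since $\gcd(ad-bc,p^{e+k})$ depends only on $ad-bc$ modulo $p^{e+k}$, this choice is harmless. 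Write $p^{k}=\gcd(a,b,c,d,p^e)$ with $0\le k\le e$.

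Consider first the base case $k=0$. Here some entry is a unit of $\mathbb{Z}_{p^e}$, and after a row/column swap I may assume $a\in\mathbb{Z}_{p^e}^{*}$. The first equation gives $x\equiv -a^{-1}by$, and substituting into the second yields $a^{-1}(ad-bc)\,y\equiv 0\pmod{p^e}$, equivalently $(ad-bc)y\equiv 0\pmod{p^e}$ since $a^{-1}$ is a unit. This single congruence has exactly $\gcd(ad-bc,p^e)$ solutions $y$, and each determines $x$ uniquely, so $\mathcal{N}=\gcd(ad-bc,p^e)$. Because $k=0$ gives $\gcd(a,b,c,d,p^e)=1$, we get $N=\gcd(ad-bc,p^e)=\mathcal{N}$.

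For the general case $k\ge 1$, first dispose of $k=e$: then $a\equiv b\equiv c\equiv d\equiv 0$, every pair solves the system, $\mathcal{N}=p^{2e}$, and $N=\gcd(0,p^{2e})=p^{2e}$. So assume $1\le k<e$ and write $a=p^{k}a_0,\dots,d=p^{k}d_0$ with $\gcd(a_0,b_0,c_0,d_0,p)=1$. The system modulo $p^e$ is then equivalent to $a_0x+b_0y\equiv 0$ and $c_0x+d_0y\equiv 0\pmod{p^{e-k}}$. Reducing $x,y$ modulo $p^{e-k}$ and applying the base case (some $a_0,\dots,d_0$ is now a unit mod $p^{e-k}$) shows this reduced system has $\gcd(a_0d_0-b_0c_0,\,p^{e-k})$ solutions in $\mathbb{Z}_{p^{e-k}}^2$; since each lifts to $p^{k}\cdot p^{k}$ pairs in $\mathbb{Z}_{p^e}^2$, we obtain $\mathcal{N}=p^{2k}\gcd(a_0d_0-b_0c_0,p^{e-k})$. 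On the other hand $ad-bc=p^{2k}(a_0d_0-b_0c_0)$, so $N=\gcd(ad-bc,p^{e+k})=p^{2k}\gcd(a_0d_0-b_0c_0,p^{e-k})$ because $2k\le e+k$; hence $\mathcal{N}=N$.

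The routine but delicate point I would check carefully is the lifting count: each solution modulo $p^{e-k}$ has exactly $p^{k}$ preimages for $x$ and $p^{k}$ for $y$, all satisfying the congruences modulo $p^{e-k}$, which gives the factor $p^{2k}$, and this must be matched against the gcd identity $\gcd(p^{2k}X,p^{e+k})=p^{2k}\gcd(X,p^{e-k})$. Conceptually the whole lemma is the statement that, writing the Smith normal form of the matrix over $\mathbb{Z}_{p^e}$ as $\mathrm{diag}(p^{\alpha},p^{\beta})$, one has $\mathcal{N}=p^{\min(\alpha,e)+\min(\beta,e)}$ and this equals $N$; the elementary reduction above is simply an explicit way to compute these two invariants and confirm the equality in every case.
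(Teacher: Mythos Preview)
Your proof is correct and follows essentially the same route as the paper: both arguments split on $p^{k}=\gcd(a,b,c,d,p^e)$ into the cases $k=0$, $k=e$, and $0<k<e$, handle the unit-entry case by solving for one variable, and treat the intermediate case by dividing out $p^{k}$ and lifting from $\mathbb{Z}_{p^{e-k}}^2$ with a $p^{2k}$ multiplicity. Your write-up is slightly more explicit (the swap invariance, the $\gcd$ identity $\gcd(p^{2k}X,p^{e+k})=p^{2k}\gcd(X,p^{e-k})$, the Smith normal form remark), but the structure and ideas are identical to the paper's.
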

\begin{proof}
Suppose that $\gcd(ad-bc,p^e)=p^f$  and $\gcd(a,b,c,d,p^e)=p^g$ for  $f,g\in \{0,1,\ldots,e\}$.

\begin{itemize}
\item[--]
If $g=e$, then $a=b=c=d\equiv 0\bmod p^e$. It follows that $\mathcal{N}=p^{2e}=N$.

\item[--]
If $g=0$, then  (w.l.o.g.) we may suppose that $\gcd(a,p)=1$.
Let $a^{-1}$ be the inverse of $a$ modulo $p^e$.
Then it is easy to see that $(x,y)\in
\mathbb{Z}_{p^e}^2$ satisfies (\ref{equ:order_2_equation_system}) if and only if
$$y\equiv 0 \bmod p^{e-f} {\rm~and~} x\equiv-a^{-1}by \bmod p^{e}.$$
Hence, $\mathcal{N}$ is equal to the number of the choices of
$y\in \mathbb{Z}_{p^e}$, which is equal to $p^f=N$.

\item[--]
 If $0<g<e$, then
there are $p^{-2g} N$ pairs $(x,y)\in
 \mathbb{Z}_{p^{e-g}}^2$ that satisfy
\begin{equation}\label{equ:reduced_order_2_equation_system}
p^{-g}\cdot \begin{pmatrix}
a & b \\
c & d
\end{pmatrix}
\begin{pmatrix}
x \\
y
\end{pmatrix}
\equiv
\begin{pmatrix}
0\\
0
\end{pmatrix}
\bmod p^{e-g}
\end{equation}
due to the previous case.
If  $(x,y)\in \mathbb{Z}_{p^e}^2$ is a solution for
 (\ref{equ:order_2_equation_system}), then
$$(x \bmod p^{e-g}, y \bmod p^{e-g})$$
 is a solution for (\ref{equ:reduced_order_2_equation_system}).
Conversely, if $(x,y)\in \mathbb{Z}_{p^{e-g}}^2$ is a solution for (\ref{equ:reduced_order_2_equation_system}), then
 $$(x+kp^{e-g}, y+lp^{e-g})$$ is  a solution of
  (\ref{equ:order_2_equation_system})
 for every  $k,l\in \mathbb{Z}_{p^g}$.
Hence,
$\mathcal{N}=p^{2g}\cdot (p^{-2g}\cdot N)=N$.
\end{itemize}
\end{proof}

\begin{definition}
{\em ($p$-adic Valuation)}
Let $p$ be a prime and  $t\in \mathbb{Z}$, we denote by
$\nu_p(t)$ the largest nonnegative integer $e$ such that $p^e|t$.
 In particular,  we set $\nu_p(0)=\infty$.
\end{definition}

\begin{lemma}
Let $p$ be a  prime and $s,t\in \mathbb{Z}$. Then
\begin{itemize}
\item[--] $\nu_p(t)\leq\infty$ and $\nu_p(t)=\infty$ if and only if $t=0$;
\item[--] $\nu_p(st)=\nu_p(s)+\nu_p(t)$;
\item[--] $\nu_p(s+t)\geq \min\{\nu_p(s), \nu_p(t)\}$ and the equality holds when $\nu_p(s)\neq \nu_p(t)$.
\end{itemize}
\end{lemma}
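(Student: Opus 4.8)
The plan is to establish the three bullet points in turn, in each case reducing everything to the canonical factorization $t = p^{\nu_p(t)}t'$ with $p\nmid t'$, which is available for every nonzero $t$ directly from the definition of $\nu_p$. Throughout I would keep the degenerate cases $s=0$ or $t=0$ separate and handle them via the conventions $\nu_p(0)=\infty$, $\infty+a=\infty$, and $\infty>e$ for finite $e$.

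For the first bullet, I would observe that by definition $\nu_p(t)$ takes values in $\{0,1,2,\dots\}\cup\{\infty\}$, so $\nu_p(t)\le\infty$ is immediate and the only real content is the characterization of $\infty$. Here $\nu_p(0)=\infty$ holds by fiat, while conversely if $t\neq 0$ then any exponent $e$ with $p^e\mid t$ forces $p^e\le |t|$, so the set of such $e$ is finite and $\nu_p(t)<\infty$. This yields $\nu_p(t)=\infty \iff t=0$.

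For the second bullet, if $s=0$ or $t=0$ then $st=0$ and both sides equal $\infty$. If both are nonzero I would write $s=p^{\nu_p(s)}s'$ and $t=p^{\nu_p(t)}t'$ with $p\nmid s'$ and $p\nmid t'$, so that $st=p^{\nu_p(s)+\nu_p(t)}s't'$. The single place where primality of $p$ is genuinely needed is exactly here: Euclid's lemma gives $p\nmid s't'$, which is what allows one to read off $\nu_p(st)=\nu_p(s)+\nu_p(t)$ from this factorization. For the third bullet, I would set $M=\min\{\nu_p(s),\nu_p(t)\}$ and assume without loss of generality $\nu_p(s)\le\nu_p(t)$; since $p^M$ divides both $s$ and $t$ it divides $s+t$, giving $\nu_p(s+t)\ge M$ (this also covers the case where one summand is $0$). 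When $\nu_p(s)<\nu_p(t)$ I would factor $s+t=p^{\nu_p(s)}\bigl(s'+p^{\nu_p(t)-\nu_p(s)}t'\bigr)$, note that the bracketed factor is $\equiv s'\not\equiv 0 \pmod p$, and conclude $\nu_p(s+t)=\nu_p(s)=M$.

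The arguments are entirely elementary, so I do not anticipate a conceptual obstacle. The only points requiring care are the consistent bookkeeping with the symbol $\infty$ in the cases $s=0$ or $t=0$, and the observation that it is the primality of $p$ — not merely $p>1$ — that upgrades the second identity from an inequality $\nu_p(st)\ge\nu_p(s)+\nu_p(t)$ to an exact equality.
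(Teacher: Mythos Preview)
Your argument is correct and complete; the paper itself states this lemma without proof, treating these as standard properties of the $p$-adic valuation. Your write-up supplies exactly the elementary verification one would expect, so there is nothing to compare against.
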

Let $u,v\in \mathbb{P}_{n,p^e}$ be arbitrary.
For every  $i,j\in [n]$, we define
\begin{equation}
\label{eqn:xij}
\xi_{ij}=u_iv_j-u_jv_i.
\end{equation}
Let
\begin{equation}
\label{eqn:defxi}
\alpha=\min\{\nu_p(\xi_{ij}): i,j\in [n]\}{\rm~and~}\xi=p^{\min(\alpha,e)}.
\end{equation}
Then  $\xi$ is the greatest common divisor of all the integers in $\{\xi_{ij}: i,j\in [n]\}$ and $p^e$.

\begin{lemma}\label{lem:good_coeff_1}
The following properties hold
\begin{enumerate}
\item[\em (a)]  There are distinct integers  $i,j\in [n]$ such that
\begin{equation}
\label{eqn:defij}
\gcd(u_i,u_j,v_i,v_j,p^e)=1{\rm~and~} \xi=\gcd(\xi_{ij},p^e).
\end{equation}
\item[\em (b)]  $\nu_p(\xi)<e$.
\end{enumerate}
\end{lemma}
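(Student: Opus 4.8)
The plan is to reduce the whole statement to a single well-chosen coordinate. Since $u\in\mathbb{S}_{n,p^e}$, some coordinate $u_a$ is a unit; fix such an index $a$. The engine of the proof is the three-term Pl\"ucker identity
\[
u_a\,\xi_{ij}=u_i\,\xi_{aj}-u_j\,\xi_{ai},
\]
valid for all $i,j\in[n]$ and verified by expanding the right-hand side. Because $u_a$ is a unit, it shows that $\nu_p(\xi_{ij})\ge\min\{\nu_p(\xi_{ai}),\nu_p(\xi_{aj})\}\ge\min_k\nu_p(\xi_{ak})$ for every pair $(i,j)$. Taking the minimum over all $i,j$ gives $\alpha\ge\min_k\nu_p(\xi_{ak})$, while the reverse inequality is trivial (the pairs through $a$ form a subset), so $\min_k\nu_p(\xi_{ak})=\alpha$. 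In other words, the minimal valuation $\alpha$ is already attained among pairs involving the unit coordinate $a$.

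For part (a) I would then use that every pair $(a,j)$ with $j\ne a$ automatically satisfies $\gcd(u_a,u_j,v_a,v_j,p^e)=1$, since $u_a$ is a unit; the only thing left to arrange is the valuation condition $\gcd(\xi_{aj},p^e)=\xi$, i.e.\ $\min(\nu_p(\xi_{aj}),e)=\min(\alpha,e)$. I would split into two cases. If $\alpha\ge e$, then $\nu_p(\xi_{aj})\ge\alpha\ge e$ for every $j$, so any $j\ne a$ works (and such $j$ exists since $n\ge2$). If $\alpha<e$, the identity above yields an index $j^\ast$ with $\nu_p(\xi_{aj^\ast})=\alpha<\infty$, forcing $j^\ast\ne a$ because $\xi_{aa}=0$; the pair $(a,j^\ast)$ then meets both requirements in (\ref{eqn:defij}).

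For part (b) I would argue by contradiction, in the relevant case $u\not\sim v$ (note that (b) genuinely fails when $u\sim v$, where all $\xi_{ij}\equiv0$ and $\nu_p(\xi)=e$). The claim $\nu_p(\xi)<e$ is equivalent to $\alpha<e$. If instead $\alpha\ge e$, then $\xi_{ij}\equiv0\bmod p^e$ for all $i,j$; specializing $i=a$ and multiplying by $u_a^{-1}$ gives $v_j\equiv(u_a^{-1}v_a)\,u_j\bmod p^e$ for every $j$. Writing $\mu=u_a^{-1}v_a$, the relation $v\equiv\mu u$ combined with the fact that $v$ has a unit coordinate forces $\mu$ to be a unit, whence $u\sim v$, contradicting the hypothesis.

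The main obstacle is the worry in part (a) that a pair minimizing $\nu_p(\xi_{ij})$ might avoid every unit coordinate, so that the coprimality and valuation conditions cannot be secured simultaneously. The Pl\"ucker identity dissolves this difficulty by transporting the minimum onto the pairs through the fixed unit index $a$; after that reduction both parts are short. The only other point requiring care is the standing hypothesis $u\not\sim v$ used in part (b), without which the valuation of $\xi$ saturates at $e$.
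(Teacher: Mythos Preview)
Your proof is correct and follows essentially the same approach as the paper: both fix an index $s$ (your $a$) where $u_s$ is a unit and exploit the identity $u_s\xi_{ij}=u_i\xi_{sj}-u_j\xi_{si}$, and the argument for (b) is identical. The only cosmetic difference is that you first isolate the consequence $\min_k\nu_p(\xi_{ak})=\alpha$ and then read off (a) directly, whereas the paper reaches the same conclusion by a short contradiction (assuming no minimizing pair has the coprimality property, then using the identity to force $p^{\alpha+1}\mid\xi_{ij}$).
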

\begin{proof}
Since $u\in \mathbb{P}_{n,p^e}$,  there is an integer $s\in [n]$ such that
$\gcd(u_s,p)=1$.

(a)
If $\alpha=\infty$, then we may take $i=s$ and then (\ref{eqn:defij}) follows.
From now on, we suppose that
$\alpha<\infty$.   If  (\ref{eqn:defij}) does not hold, then
we have that
$\gcd(u_i,u_j,v_i,v_j, p^e)>1$ for any $i,j\in [n]$ such that
  $\nu_p(\xi_{ij})=\alpha$.
Since $\gcd(u_s, p)=1$, this implies that $p^{\alpha+1}|\xi_{si}$ and $p^{\alpha+1}|\xi_{sj}$.
It follows that   $p^{\alpha+1}|(u_i\xi_{sj}- u_j\xi_{si})$, i.e., $p^{\alpha+1}|u_s\xi_{ij}$.
Since $\gcd(u_s, p)=1$, we have that $p^{\alpha+1}|\xi_{ij}$, which  is  a contradiction.

(b)
If  $\nu_p(\xi)=e$, then $\xi_{si} \equiv 0 \bmod p^e$ for every $i\in [n]$.
It follows that $v_i\equiv u_s^{-1}v_s u_i\bmod p^e$,
where $u_s^{-1}$ is the inverse of $u_s$ modulo $p^e$.
Note that $v_s\not \equiv 0\bmod p$ since otherwise we will have that $v\notin \mathbb{P}_{n,p^e}$.
It follows that $u\sim v$, which is a contradiction because $u$ and $v$ are distinct equivalence classes.
\end{proof}

We shall determine the $(u,v)$ entry of $B_{n,p^e}$, i.e., $b_{uv}$ in the remainder of
 this section.  Clearly, $b_{uv}$ is the number of $w\in \mathbb{P}_{n,p^e}$ that
 satisfy the following equation system:
\begin{equation}\label{equ:2_wise_sol}
\left\{
\begin{aligned}
\langle  u,  w \rangle \equiv 0\bmod p^e , \\
\langle  v,  w\rangle \equiv 0\bmod p^e.
\end{aligned}
\right.
\end{equation}
For every $g\in \{0,1,\ldots, e\}$, we define
$$p^g\cdot  \mathbb{Z}_{p^e}^{n}=\{w\in \mathbb{Z}_{p^e}^{n}:
p^g|\gcd(w_1, \ldots, w_n)\},$$

\begin{lemma}\label{lem:2_wise_sol_class}
For every $g\in \{0,1,\ldots, e\}$,  the equation system
(\ref{equ:2_wise_sol}) has exactly  $p^{\beta+(e-g)(n-2)}$ solutions in
$p^g\cdot  \mathbb{Z}_{p^e}^{n}$, where $\beta=\min(\nu_p(\xi),e-g)$.
\end{lemma}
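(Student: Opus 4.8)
The plan is to strip off the common factor $p^g$, reducing to counting \emph{all} solutions of the system over a smaller ring, and then to perform a single Gaussian elimination step using a carefully chosen pair of pivot coordinates supplied by Lemma~\ref{lem:good_coeff_1}.

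First I would reduce the modulus. Every $w\in p^g\cdot\mathbb{Z}_{p^e}^n$ is uniquely of the form $w=p^g t$ with $t\in\mathbb{Z}_{p^{e-g}}^n$, and $\langle u,w\rangle\equiv 0\bmod p^e$ if and only if $\langle u,t\rangle\equiv 0\bmod p^{e-g}$ (likewise for $v$). Writing $e'=e-g$, the quantity to be computed is therefore the total number of $t\in\mathbb{Z}_{p^{e'}}^n$ solving the homogeneous pair
\begin{equation}
\langle u,t\rangle\equiv 0,\qquad \langle v,t\rangle\equiv 0 \bmod p^{e'}.
\end{equation}
Here $\beta=\min(\nu_p(\xi),e-g)=\min(\alpha,e')$, and by Lemma~\ref{lem:good_coeff_1}(b) we have $\alpha=\nu_p(\xi)<e$, so for the pair $i,j$ produced by Lemma~\ref{lem:good_coeff_1}(a) the identity $\xi=\gcd(\xi_{ij},p^e)$ forces $\nu_p(\xi_{ij})=\alpha$ exactly (not a truncated value).

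Next I would pivot on coordinates $i,j$. Since $\gcd(u_i,u_j,v_i,v_j,p^e)=1$, one of these four entries is a unit; after possibly swapping the two equations (which negates $\xi_{ij}$ but preserves its valuation and the solution set) and/or the two columns, I may assume $u_i\in\mathbb{Z}_{p^{e'}}^{*}$. I then treat the remaining $n-2$ coordinates $t_k$ ($k\ne i,j$) as free parameters, giving $p^{e'(n-2)}$ assignments; for each assignment I count the admissible pairs $(t_i,t_j)$. Solving the first equation as $t_i\equiv u_i^{-1}\bigl(-\sum_{k\ne i,j}u_kt_k-u_jt_j\bigr)$ and substituting into the second yields the single congruence
\begin{equation}
u_i^{-1}\xi_{ij}\,t_j\equiv c\bmod p^{e'},\qquad c=-u_i^{-1}\sum_{k\ne i,j}\xi_{ik}\,t_k .
\end{equation}
The coefficient $u_i^{-1}\xi_{ij}$ has valuation $\alpha$, so $\gcd(u_i^{-1}\xi_{ij},p^{e'})=p^{\min(\alpha,e')}=p^{\beta}$. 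By the minimality of $\alpha$ over all index pairs, $p^{\alpha}\mid\xi_{ik}$ for every $k$, hence $p^{\beta}\mid c$; the congruence is therefore always solvable and has exactly $p^{\beta}$ solutions $t_j$, each determining $t_i$ uniquely. Multiplying by the $p^{e'(n-2)}$ choices of free coordinates gives $p^{\beta+e'(n-2)}=p^{\beta+(e-g)(n-2)}$, as claimed. (Equivalently, one may obtain the $2\times 2$ count $p^{\beta}$ directly from Lemma~\ref{lem:order_2_equation_system} applied to the coefficient matrix $\begin{pmatrix}u_i&u_j\\ v_i&v_j\end{pmatrix}$, whose determinant is $\xi_{ij}$ and whose entries have gcd $1$ with $p^{e'}$, and then invoke the fact that an inhomogeneous system over $\mathbb{Z}_{p^{e'}}$ has either $0$ or $p^{\beta}$ solutions.)

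The step I expect to be the main obstacle is the \emph{uniform solvability}: showing $p^{\beta}\mid c$ for every assignment of the free variables, so that no choice is wasted and every assignment contributes exactly $p^{\beta}$ pairs. This is precisely where the minimality of $\alpha$ across all pairs is indispensable, since it forces each off-pivot minor $\xi_{ik}$ to be divisible by $p^{\alpha}$ and hence by $p^{\beta}$. Lemma~\ref{lem:good_coeff_1}(a) does the double duty of furnishing the unit needed to pivot and pinning $\nu_p(\xi_{ij})=\alpha$, while part (b) ensures $\alpha<e$ so that this valuation is genuinely $\alpha$; both are the load-bearing inputs to the argument.
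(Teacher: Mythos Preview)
Your proof is correct and follows essentially the same strategy as the paper's: reduce to counting solutions over $\mathbb{Z}_{p^{e-g}}^n$, use the pivot pair from Lemma~\ref{lem:good_coeff_1}, treat the remaining $n-2$ coordinates as free, and invoke the minimality of $\alpha$ to guarantee solvability for every choice of the free variables. The only cosmetic differences are that the paper handles $g=0$ first and then reduces $g>0$ to it (you do the reduction uniformly at the outset), and that the paper obtains the $2\times2$ count via Lemma~\ref{lem:order_2_equation_system} together with an explicit particular solution $(x,y)=\bigl(\sum_k \eta^{-1}\xi_{jk}p^{-h}w_k,\ \sum_k \eta^{-1}\xi_{ki}p^{-h}w_k\bigr)$, whereas you eliminate $t_i$ and reduce to a single congruence in $t_j$---an alternative you yourself mention.
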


\begin{proof}
We prove for  Case I: $g=0$ and Case II: $0<g\leq e$, respectively.

\begin{itemize}
\item[--]
{Case I:}
Due to Lemma \ref{lem:good_coeff_1}, we may suppose that   $\gcd(u_1,u_2,v_1,v_2,p^e)=1$
and $\xi=\gcd(\xi_{12},p^e)$.
It suffices to show  that for every $(w_3,\ldots, w_n)\in \mathbb{Z}_{p^e}^{n-2}$,
the following equation system
\begin{equation}\label{equ:inter_order_2_es}
\begin{pmatrix}
u_1 & u_2\\
v_1 & v_2
\end{pmatrix}
\begin{pmatrix}
x\\
y
\end{pmatrix}
\equiv
\begin{pmatrix}
-\sum_{k=3}^{n} u_kw_k\\
-\sum_{k=3}^{n} v_k w_k
\end{pmatrix}
\bmod p^e
\end{equation}
has exactly $\xi$ solutions in $\mathbb{Z}_{p^e}^2$.
Due to Lemma \ref{lem:order_2_equation_system}, the homogenous form of
(\ref{equ:inter_order_2_es}) has  exactly $\xi$
solutions in $\mathbb{Z}_{p^e}^2$.

Hence, it suffices to show that
(\ref{equ:inter_order_2_es}) is solvable.	
Let $\nu_p(\xi_{12})=h$. Then there is an  integer $\eta \in \mathbb{Z}_{p^e}^*$
such that  $\xi_{12}=p^h \eta$.
Let $\eta^{-1}$ be the inverse of $\eta$ modulo $p^e$.
 Then
$$(x,y)=\bigg(\sum_{k=3}^{n} \frac{\eta^{-1} \xi_{2k}}{p^h}w_k,
\sum_{k=3}^{n} \frac{\eta^{-1} \xi_{k1}}{p^h}w_k\bigg)$$
is a solution of (\ref{equ:inter_order_2_es}).

\item[--] Case II:
Let $w\in p^g\cdot \mathbb{Z}_{p^e}^{n}$ and $w^\prime=p^{-g}\cdot w$.
Then $w$  satisfies  (\ref{equ:2_wise_sol}) if and only if
\begin{equation}\label{equ:2_wise_sol_class}
\left\{
\begin{aligned}
\langle  u,  w^\prime \rangle \equiv 0\bmod p^{e-g} , \\
\langle  v,  w^\prime \rangle \equiv 0\bmod p^{e-g}.
\end{aligned}
\right.
\end{equation}
Due to Case I,  the equation system (\ref{equ:2_wise_sol_class})
has exactly $p^{\beta+(e-g)(n-2)}$ solutions  $w^\prime\in \mathbb{Z}_{p^{e-g}}^{n}$.
\end{itemize}
\end{proof}

\begin{lemma}\label{cor:B_entry}
Let $u,v\in \mathbb{P}_{n,m}$ be arbitrary. Then the $(u,v)$ entry of $B_{n,p^e}$ is
\begin{equation}
\label{eqn:buv}
b_{uv}=\frac{1}{\phi(p^e)}\left(p^{\nu_p(\xi)+e(n-2)}-p^{\min(\nu_p(\xi),e-1)+
(e-1)(n-2)}\right).
\end{equation}
\end{lemma}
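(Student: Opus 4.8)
The plan is to pass from counting equivalence classes to counting representatives and then invoke Lemma \ref{lem:2_wise_sol_class}. By definition, $b_{uv}$ is the number of classes $w\in\mathbb{P}_{n,p^e}$ that satisfy the system (\ref{equ:2_wise_sol}). First I would observe that the group $\mathbb{Z}_{p^e}^*$ acts on $\mathbb{S}_{n,p^e}$ by coordinatewise scalar multiplication, and that the orbits of this action are exactly the equivalence classes defining $\mathbb{P}_{n,p^e}$. Since every $w\in\mathbb{S}_{n,p^e}$ has at least one coordinate that is a unit, the stabilizer of $w$ is trivial: if $\lambda w=w$ then $(\lambda-1)w_i\equiv 0\bmod p^e$ for a unit $w_i$, forcing $\lambda\equiv 1$. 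Hence the action is free and every class contains exactly $\phi(p^e)=|\mathbb{Z}_{p^e}^*|$ representatives. Moreover the solution set of (\ref{equ:2_wise_sol}) inside $\mathbb{S}_{n,p^e}$ is a union of whole orbits, because both inner products are linear in $w$, so that $\langle u,\lambda w\rangle\equiv\lambda\langle u,w\rangle$. Consequently $b_{uv}$ equals $1/\phi(p^e)$ times the number of $w\in\mathbb{S}_{n,p^e}$ satisfying (\ref{equ:2_wise_sol}).

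Next I would count those representatives by inclusion. Since $m=p^e$, we have $\mathbb{S}_{n,p^e}=\mathbb{Z}_{p^e}^n\setminus(p\cdot\mathbb{Z}_{p^e}^n)$: a vector lies in $\mathbb{S}_{n,p^e}$ precisely when not all its coordinates are divisible by $p$. Therefore the number of solutions in $\mathbb{S}_{n,p^e}$ is the number of solutions in $\mathbb{Z}_{p^e}^n$ minus the number of solutions in $p\cdot\mathbb{Z}_{p^e}^n$. Applying Lemma \ref{lem:2_wise_sol_class} with $g=0$ gives $p^{\beta_0+e(n-2)}$ solutions in $\mathbb{Z}_{p^e}^n$, where $\beta_0=\min(\nu_p(\xi),e)$; by Lemma \ref{lem:good_coeff_1}(b) we have $\nu_p(\xi)<e$, so $\beta_0=\nu_p(\xi)$ and this count is $p^{\nu_p(\xi)+e(n-2)}$. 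Applying the same lemma with $g=1$ gives $p^{\min(\nu_p(\xi),e-1)+(e-1)(n-2)}$ solutions in $p\cdot\mathbb{Z}_{p^e}^n$.

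Subtracting the two counts and dividing by $\phi(p^e)$ yields exactly the claimed expression (\ref{eqn:buv}). The only genuinely delicate point is the first paragraph: one must verify carefully that the scalar action is free on $\mathbb{S}_{n,p^e}$ and that the solution set is a union of complete equivalence classes, so that the division by $\phi(p^e)$ is exact rather than merely approximate. Once this orbit-counting bookkeeping is in place, the remainder is a direct substitution into Lemma \ref{lem:2_wise_sol_class}, using only the bound $\nu_p(\xi)<e$ to simplify the $g=0$ term.
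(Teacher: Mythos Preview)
Your argument is correct and follows essentially the same route as the paper: write $\mathbb{S}_{n,p^e}=\mathbb{Z}_{p^e}^n\setminus p\cdot\mathbb{Z}_{p^e}^n$, apply Lemma~\ref{lem:2_wise_sol_class} with $g=0$ and $g=1$, subtract, and divide by $\phi(p^e)$. Your orbit-counting justification for the division by $\phi(p^e)$ is in fact more explicit than what the paper gives. One small imprecision: you invoke Lemma~\ref{lem:good_coeff_1}(b) to obtain $\nu_p(\xi)<e$, but that lemma's proof tacitly assumes $u\neq v$ (the contradiction is ``$u\sim v$''); when $u=v$ one has $\nu_p(\xi)=e$. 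This does not affect the argument, since by the very definition $\xi=p^{\min(\alpha,e)}$ we always have $\nu_p(\xi)\le e$, which is all that is needed to conclude $\beta_0=\min(\nu_p(\xi),e)=\nu_p(\xi)$.
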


\begin{proof}
Clearly, we have that  $\mathbb{Z}_{p^e}^n\setminus p\cdot \mathbb{Z}_{p^e}^n=\mathbb{S}_{n,p^e}$.
Let $N_{uv}$ be the number of solutions of (\ref{equ:2_wise_sol}) in $\mathbb{S}_{n,p^e}$. Due to Lemma
\ref{lem:2_wise_sol_class}, it is not hard to see that
\begin{equation*}
N_{uv}=
\left\{
\begin{aligned}
p^{\nu_p(\xi)+e(n-2)}-p^{\min(\nu_p(\xi),e-1)+(e-1)(n-2)} \hspace{0.5cm} &{\rm~if~} u=v, \\
p^{e(n-1)}-p^{(e-1)(n-1)} \hspace{3.65cm} &{\rm~if~} u\neq v.
\end{aligned}
\right.
\end{equation*}
Note that $\nu_p(\xi)=e$ when $u=v$ and each equivalence class in
$\mathbb{P}_{n,p^e}$ contains exactly $\phi(p^e)$ elements of $\mathbb{S}_{n,p^e}$.
It follows that $b_{uv}=\phi(p^e)^{-1}N_{uv}$ and
the equation
(\ref{eqn:buv}) follows.
\end{proof}

\subsection{The Eigenvalues}

We proceed to determine the
eigenvalues of $B_{n,p^e}$.
Let $\sigma: \mathbb{S}_{n,p^e}\rightarrow
\mathbb{S}_{n, p^{e-1}}$
be the mapping defined by
\begin{equation}
\sigma(u)=(u_1\bmod p^{e-1}, \ldots, u_n\bmod p^{e-1}),
\end{equation}
where $u\in \mathbb{S}_{n,p^e}$.
Let $\tau: \mathbb{S}_{n,p^{e-1}}\rightarrow \mathbb{P}_{n,p^{e-1}}$ to be the
mapping such that
\begin{equation}
\tau(u)={\rm the~equivalence~class~of~}u.
\end{equation}
\begin{center}
\begin{figure}[H]
\centering
\ifpdf
  \setlength{\unitlength}{1bp}%
  \begin{picture}(196.70, 112.41)(0,0)
  \put(0,0){\includegraphics{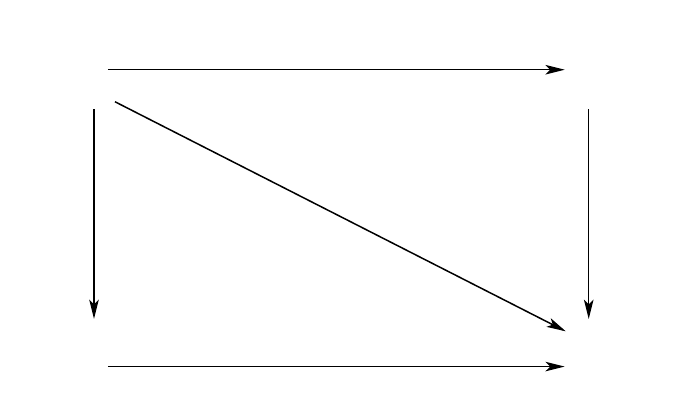}}
  \put(5.67,89.57){\fontsize{9.72}{11.66}\selectfont $\mathbb{S}_{n,p^e}$}
  \put(165.89,89.57){\fontsize{9.72}{11.66}\selectfont $\mathbb{S}_{n,p^{e-1}}$}
  \put(167.74,7.76){\fontsize{9.72}{11.66}\selectfont $\mathbb{P}_{n,p^{e-1}}$}
  \put(87.68,99.15){\fontsize{9.72}{11.66}\selectfont $\sigma$}
  \put(87.68,60.68){\fontsize{9.72}{11.66}\selectfont $\rho$}
  \put(173.62,60.68){\fontsize{9.72}{11.66}\selectfont $\tau$}
  \put(191.03,60.68){\fontsize{9.72}{11.66}\selectfont }
  \put(5.67,7.76){\fontsize{9.72}{11.66}\selectfont $\mathbb{P}_{n,p^e}$}
  \put(87.68,13.66){\fontsize{9.72}{11.66}\selectfont $\delta$}
  \put(15.30,60.68){\fontsize{9.72}{11.66}\selectfont $\psi$}
  \end{picture}%
\else
  \setlength{\unitlength}{1bp}%
  \begin{picture}(196.70, 112.41)(0,0)
  \put(0,0){\includegraphics{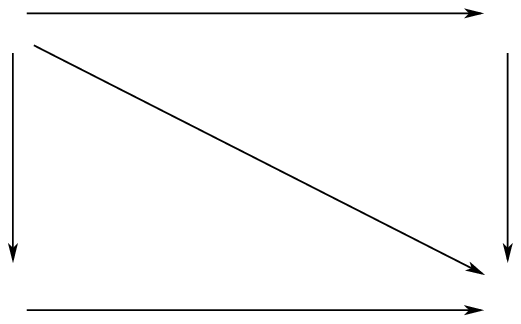}}
  \put(5.67,89.57){\fontsize{9.72}{11.66}\selectfont $\mathbb{S}_{n,p^e}$}
  \put(165.89,89.57){\fontsize{9.72}{11.66}\selectfont $\mathbb{S}_{n,p^{e-1}}$}
  \put(167.74,7.76){\fontsize{9.72}{11.66}\selectfont $\mathbb{P}_{n,p^{e-1}}$}
  \put(87.68,99.15){\fontsize{9.72}{11.66}\selectfont $\sigma$}
  \put(87.68,60.68){\fontsize{9.72}{11.66}\selectfont $\rho$}
  \put(173.62,60.68){\fontsize{9.72}{11.66}\selectfont $\tau$}
  \put(191.03,60.68){\fontsize{9.72}{11.66}\selectfont }
  \put(5.67,7.76){\fontsize{9.72}{11.66}\selectfont $\mathbb{P}_{n,p^e}$}
  \put(87.68,13.66){\fontsize{9.72}{11.66}\selectfont $\delta$}
  \put(15.30,60.68){\fontsize{9.72}{11.66}\selectfont $\psi$}
  \end{picture}%
\fi
\caption{The mappings}
\end{figure}
\end{center}
Let $\rho=\tau \circ \sigma$.
The following lemma shows that the mapping $\rho$ is balanced in the sense that
every equivalence class in $\mathbb{P}_{n,p^{e-1}}$ has the same number of
preimages in $\mathbb{S}_{n,p^e}$.

\begin{lemma}
\label{lem:rho}
We have that
 $|\rho^{-1}(v)|=p^{n}\phi(p^{e-1})$ for every $v\in \mathbb{P}_{n,p^{e-1}}$.
\end{lemma}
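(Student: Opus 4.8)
\textbf{Proof proposal for Lemma \ref{lem:rho}.}

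The plan is to fix an arbitrary $v\in\mathbb{P}_{n,p^{e-1}}$ and count its preimages under $\rho=\tau\circ\sigma$ by composing the counts for the two maps separately. Since $\rho^{-1}(v)=\sigma^{-1}(\tau^{-1}(v))$, I would first count $\tau^{-1}(v)$ inside $\mathbb{S}_{n,p^{e-1}}$, and then count $\sigma^{-1}(\cdot)$ of each such element inside $\mathbb{S}_{n,p^e}$.

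First I would handle $\tau$. By definition $\tau$ sends an element of $\mathbb{S}_{n,p^{e-1}}$ to its $\sim$-equivalence class, so $\tau^{-1}(v)$ is exactly the set of representatives of the class $v$, which by the observation already recorded in the proof of Lemma \ref{cor:B_entry} has cardinality $\phi(p^{e-1})$ (each equivalence class in $\mathbb{P}_{n,p^{e-1}}$ contains precisely $\phi(p^{e-1})$ elements of $\mathbb{S}_{n,p^{e-1}}$). Next I would handle $\sigma$: for a fixed $\bar u\in\mathbb{S}_{n,p^{e-1}}$, the fiber $\sigma^{-1}(\bar u)$ consists of those $u\in\mathbb{Z}_{p^e}^n$ with $u\equiv\bar u\bmod p^{e-1}$, of which there are exactly $p^n$ (each coordinate lifts in $p$ ways). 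The key point here is that every such lift automatically lies in $\mathbb{S}_{n,p^e}$: since $\bar u\in\mathbb{S}_{n,p^{e-1}}$ some coordinate $\bar u_s$ is a unit mod $p$, hence the corresponding $u_s$ is a unit mod $p$ as well, so $\gcd(u_1,\dots,u_n,p^e)=1$. Thus $\sigma$ is genuinely $p^n$-to-one from $\mathbb{S}_{n,p^e}$ onto $\mathbb{S}_{n,p^{e-1}}$, with no lift lost to the non-coprime set.

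Combining these two counts, $|\rho^{-1}(v)|=\sum_{\bar u\in\tau^{-1}(v)}|\sigma^{-1}(\bar u)|=\phi(p^{e-1})\cdot p^n$, which is the claimed value. The only subtlety I would be careful about is the coprimality step in the $\sigma$ count, namely verifying that reducing and lifting keeps us inside the sets $\mathbb{S}_{n,\cdot}$; once that is confirmed the two fiber sizes are genuinely independent of the chosen $v$ and $\bar u$, so the product formula holds uniformly for every $v$. Everything else is a direct cardinality computation, so I do not anticipate any real obstacle beyond phrasing this lifting argument cleanly.
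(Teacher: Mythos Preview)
Your proof is correct and follows essentially the same decomposition as the paper: count $|\tau^{-1}(v)|=\phi(p^{e-1})$ and $|\sigma^{-1}(\bar u)|=p^n$, then multiply. The one minor difference is that you verify $|\sigma^{-1}(\bar u)|=p^n$ directly by checking that every lift of $\bar u$ to $\mathbb{Z}_{p^e}^n$ already lies in $\mathbb{S}_{n,p^e}$, whereas the paper first shows $|\sigma^{-1}(\bar u)|\ge p^n$ via the explicit lifts and then forces equality by the global identity $|\mathbb{S}_{n,p^e}|=p^n|\mathbb{S}_{n,p^{e-1}}|$; your route is slightly cleaner but the underlying idea is the same.
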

\begin{proof}
Let $w\in \mathbb{S}_{n,p^{e-1}}$ be arbitrary.
For every $\gamma=(\gamma_1,\ldots, \gamma_n)\in \mathbb{Z}_p^n$, we define
$$w_\gamma=(w_1+\gamma_1 p^{e-1},\ldots, w_n+\gamma_n p^{e-1}).$$
It is easy to see that
$w_\gamma\in \rho^{-1}(w)$ for every $\gamma\in \mathbb{Z}_p^n$ and
$w_\gamma\neq w_{\gamma^\prime}$ whenever $\gamma,\gamma^\prime\in \mathbb{Z}_p^n$
are distinct.
It follows that $|\sigma^{-1}(w)|\geq p^n$.
If $|\sigma^{-1}(w)|>p^n$ for certain choice of $w\in \mathbb{S}_{n,p^{e-1}}$, then we must have that
$$|\mathbb{S}_{n,p^e}| >  p^n |\mathbb{S}_{n,p^{e-1}}|.$$
However, due to (\ref{eqn:theta}), we have that
\begin{equation*}
|\mathbb{S}_{n,p^e}|=\phi(p^e)|\mathbb{P}_{n,p^{e}}|=
p^n \phi(p^{e-1})|\mathbb{P}_{n,p^{e-1}}|=p^n|\mathbb{S}_{n,p^{e-1}}|.
\end{equation*}
Therefore, we must have that $|\sigma^{-1}(w)|=p^n$.
Let $v\in \mathbb{P}_{n,p^{e-1}}$ be arbitrary. Then it is easy to see that
$|\tau^{-1}(v)|=\phi(p^{e-1})$. It follows that
$|\rho^{-1}(v)|=|\sigma^{-1}(\tau^{-1}(v))|=p^n\phi(p^{e-1})$.
\end{proof}

Let $\psi: \mathbb{S}_{n,p^e}\rightarrow
\mathbb{P}_{n, p^{e}}$
be the mapping defined by
\begin{equation}
\psi(u)={\rm~the~equivalence~class~of~}u,
\end{equation}
where $u\in \mathbb{S}_{n,p^e}$.
Let $\delta: \mathbb{P}_{n,p^e}\rightarrow
\mathbb{P}_{n, p^{e-1}}$
be the mapping defined by
\begin{equation}
\delta(u)={\rm~the~equivalence~class~of~}(u_1\bmod p^{e-1}, \ldots, u_n\bmod p^{e-1}),
\end{equation}
where $u\in \mathbb{P}_{n,p^e}$. Then
$\delta\circ \psi=\rho$. The following lemma shows that
the mapping $\delta$ is $p^{n-1}$ to 1.
\begin{lemma}
\label{lem:partition}
We have that $|\delta^{-1}(v)|=p^{n-1}$ for every $v\in \mathbb{P}_{n,p^{e-1}}$.
\end{lemma}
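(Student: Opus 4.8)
The plan is to read off the claim from the commutative relation $\delta\circ\psi=\rho$ together with the fact that $\psi$ has fibres of constant size, thereby reducing everything to the already-proved Lemma~\ref{lem:rho}. First I would record that $\psi$ is exactly $\phi(p^e)$-to-one. The group $\mathbb{Z}_{p^e}^*$ acts on $\mathbb{S}_{n,p^e}$ by scalar multiplication, and this action is free: any $u\in\mathbb{S}_{n,p^e}$ has a coordinate $u_i\in\mathbb{Z}_{p^e}^*$, so $\lambda u=u$ forces $\lambda u_i=u_i$, whence $\lambda=1$. Consequently every orbit, i.e. every fibre $\psi^{-1}(w)$ with $w\in\mathbb{P}_{n,p^e}$, has cardinality $|\mathbb{Z}_{p^e}^*|=\phi(p^e)$. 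This is the same orbit count already invoked in the proof of Lemma~\ref{cor:B_entry}.

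Next, I would fix $v\in\mathbb{P}_{n,p^{e-1}}$ and take preimages through $\delta\circ\psi=\rho$. Since $\psi$ is a genuine function, the fibre $\rho^{-1}(v)$ splits as a disjoint union of $\psi$-fibres:
\begin{equation*}
\rho^{-1}(v)=(\delta\circ\psi)^{-1}(v)=\psi^{-1}\bigl(\delta^{-1}(v)\bigr)=\bigsqcup_{w\in\delta^{-1}(v)}\psi^{-1}(w).
\end{equation*}
Because each $\psi^{-1}(w)$ has size $\phi(p^e)$ by the first step, counting the two sides yields
\begin{equation*}
|\rho^{-1}(v)|=\phi(p^e)\cdot|\delta^{-1}(v)|.
\end{equation*}

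Finally, I would substitute the value $|\rho^{-1}(v)|=p^n\phi(p^{e-1})$ supplied by Lemma~\ref{lem:rho} and solve for $|\delta^{-1}(v)|$, using the elementary identity $\phi(p^{e-1})=\phi(p^e)/p$ (valid for $e\geq 2$, which is the relevant range since $\mathbb{P}_{n,p^{e-1}}$ must be a genuine projective space):
\begin{equation*}
|\delta^{-1}(v)|=\frac{p^n\phi(p^{e-1})}{\phi(p^e)}=p^{n-1}.
\end{equation*}
As this count is the same for every $v$, the map $\delta$ is uniformly $p^{n-1}$-to-one, as claimed. The whole argument is a fibre-counting diagram chase; the only points requiring a word of justification are the freeness of the unit action (which gives the constant fibre size of $\psi$) and the identity $\phi(p^{e-1})=\phi(p^e)/p$, neither of which is a real obstacle. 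All the genuine content has in effect already been absorbed into Lemma~\ref{lem:rho}.
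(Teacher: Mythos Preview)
Your argument is correct and follows essentially the same approach as the paper: both proofs decompose $\rho^{-1}(v)$ into $\psi$-fibres of size $\phi(p^e)$ and then divide the count from Lemma~\ref{lem:rho} by $\phi(p^e)$. The paper begins by explicitly re-verifying that $u\sim v$ in $\mathbb{S}_{n,p^e}$ implies $\rho(u)=\rho(v)$ (i.e.\ that $\rho^{-1}(v)$ is a union of equivalence classes), whereas you invoke the already-stated identity $\delta\circ\psi=\rho$ to obtain the same decomposition---these are two ways of saying the same thing.
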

\begin{proof}
Suppose that $u,v\in \mathbb{S}_{n,p^e}$ and $u\sim v$. Then there is an integer $\lambda\in
\mathbb{Z}_{p^e}^*$ such that
$u_i\equiv \lambda v_i\bmod p^e$ for every $i\in [n]$.
It follows that $u_i\equiv \lambda v_i\bmod p^{e-1}$ for the integer
$\lambda\in \mathbb{Z}_{p^{e-1}}^*$. In other words, we have that
$\rho(u)=\rho(v)$. Hence, for every $v\in \mathbb{P}_{n,p^{e-1}}$,
the set $\rho^{-1}(v)$ is the union of  disjoint equivalence classes
of the elements of $\mathbb{S}_{n,p^e}$.
Due to Lemma \ref{lem:rho}, we have that $|\rho^{-1}(v)|=p^n\phi(p^{e-1})$.
Since the elements in $\rho^{-1}(v)$ consists of
$\phi(p^e)^{-1} p^n \phi(p^{e-1})=p^{n-1}$ equivalence classes,
we have that
$$|\delta^{-1}(v)|={\rm the~number~of~equivalence~classes~contained~by~}\rho^{-1}(v)=p^{n-1},$$
which is the expected result.
\end{proof}


\begin{figure}[H]
\begin{center}
$
\begin{array}
{c||ccccccc|ccccccc|ccccccc|ccccccc}
(001)& 6& 1& 1& 1& 1& 1& 1& 2& 1& 1& 1& 1& 1& 1& 2& 1& 1& 1& 1& 1& 1& 2& 1& 1& 1& 1& 1& 1 \\
(010)& 1& 6& 1& 1& 1& 1& 1& 1& 2& 1& 1& 1& 1& 1& 1& 2& 1& 1& 1& 1& 1& 1& 2& 1& 1& 1& 1& 1 \\
(011)& 1& 1& 6& 1& 1& 1& 1& 1& 1& 2& 1& 1& 1& 1& 1& 1& 2& 1& 1& 1& 1& 1& 1& 2& 1& 1& 1& 1 \\
(100)& 1& 1& 1& 6& 1& 1& 1& 1& 1& 1& 2& 1& 1& 1& 1& 1& 1& 2& 1& 1& 1& 1& 1& 1& 2& 1& 1& 1 \\
(101)& 1& 1& 1& 1& 6& 1& 1& 1& 1& 1& 1& 2& 1& 1& 1& 1& 1& 1& 2& 1& 1& 1& 1& 1& 1& 2& 1& 1 \\
(110)& 1& 1& 1& 1& 1& 6& 1& 1& 1& 1& 1& 1& 2& 1& 1& 1& 1& 1& 1& 2& 1& 1& 1& 1& 1& 1& 2& 1 \\
(111)& 1& 1& 1& 1& 1& 1& 6& 1& 1& 1& 1& 1& 1& 2& 1& 1& 1& 1& 1& 1& 2& 1& 1& 1& 1& 1& 1& 2 \\ \hline
(021)& 2& 1& 1& 1& 1& 1& 1& 6& 1& 1& 1& 1& 1& 1& 2& 1& 1& 1& 1& 1& 1& 2& 1& 1& 1& 1& 1& 1 \\
(012)& 1& 2& 1& 1& 1& 1& 1& 1& 6& 1& 1& 1& 1& 1& 1& 2& 1& 1& 1& 1& 1& 1& 2& 1& 1& 1& 1& 1 \\
(013)& 1& 1& 2& 1& 1& 1& 1& 1& 1& 6& 1& 1& 1& 1& 1& 1& 2& 1& 1& 1& 1& 1& 1& 2& 1& 1& 1& 1 \\
(102)& 1& 1& 1& 2& 1& 1& 1& 1& 1& 1& 6& 1& 1& 1& 1& 1& 1& 2& 1& 1& 1& 1& 1& 1& 2& 1& 1& 1 \\
(103)& 1& 1& 1& 1& 2& 1& 1& 1& 1& 1& 1& 6& 1& 1& 1& 1& 1& 1& 2& 1& 1& 1& 1& 1& 1& 2& 1& 1 \\
(112)& 1& 1& 1& 1& 1& 2& 1& 1& 1& 1& 1& 1& 6& 1& 1& 1& 1& 1& 1& 2& 1& 1& 1& 1& 1& 1& 2& 1 \\
(113)& 1& 1& 1& 1& 1& 1& 2& 1& 1& 1& 1& 1& 1& 6& 1& 1& 1& 1& 1& 1& 2& 1& 1& 1& 1& 1& 1& 2 \\ \hline
(201)& 2& 1& 1& 1& 1& 1& 1& 2& 1& 1& 1& 1& 1& 1& 6& 1& 1& 1& 1& 1& 1& 2& 1& 1& 1& 1& 1& 1 \\
(210)& 1& 2& 1& 1& 1& 1& 1& 1& 2& 1& 1& 1& 1& 1& 1& 6& 1& 1& 1& 1& 1& 1& 2& 1& 1& 1& 1& 1 \\
(211)& 1& 1& 2& 1& 1& 1& 1& 1& 1& 2& 1& 1& 1& 1& 1& 1& 6& 1& 1& 1& 1& 1& 1& 2& 1& 1& 1& 1 \\
(120)& 1& 1& 1& 2& 1& 1& 1& 1& 1& 1& 2& 1& 1& 1& 1& 1& 1& 6& 1& 1& 1& 1& 1& 1& 2& 1& 1& 1 \\
(121)& 1& 1& 1& 1& 2& 1& 1& 1& 1& 1& 1& 2& 1& 1& 1& 1& 1& 1& 6& 1& 1& 1& 1& 1& 1& 2& 1& 1 \\
(130)& 1& 1& 1& 1& 1& 2& 1& 1& 1& 1& 1& 1& 2& 1& 1& 1& 1& 1& 1& 6& 1& 1& 1& 1& 1& 1& 2& 1 \\
(131)& 1& 1& 1& 1& 1& 1& 2& 1& 1& 1& 1& 1& 1& 2& 1& 1& 1& 1& 1& 1& 6& 1& 1& 1& 1& 1& 1& 2 \\ \hline
(221)& 2& 1& 1& 1& 1& 1& 1& 2& 1& 1& 1& 1& 1& 1& 2& 1& 1& 1& 1& 1& 1& 6& 1& 1& 1& 1& 1& 1 \\
(212)& 1& 2& 1& 1& 1& 1& 1& 1& 2& 1& 1& 1& 1& 1& 1& 2& 1& 1& 1& 1& 1& 1& 6& 1& 1& 1& 1& 1 \\
(213)& 1& 1& 2& 1& 1& 1& 1& 1& 1& 2& 1& 1& 1& 1& 1& 1& 2& 1& 1& 1& 1& 1& 1& 6& 1& 1& 1& 1 \\
(122)& 1& 1& 1& 2& 1& 1& 1& 1& 1& 1& 2& 1& 1& 1& 1& 1& 1& 2& 1& 1& 1& 1& 1& 1& 6& 1& 1& 1 \\
(123)& 1& 1& 1& 1& 2& 1& 1& 1& 1& 1& 1& 2& 1& 1& 1& 1& 1& 1& 2& 1& 1& 1& 1& 1& 1& 6& 1& 1 \\
(132)& 1& 1& 1& 1& 1& 2& 1& 1& 1& 1& 1& 1& 2& 1& 1& 1& 1& 1& 1& 2& 1& 1& 1& 1& 1& 1& 6& 1 \\
(133)& 1& 1& 1& 1& 1& 1& 2& 1& 1& 1& 1& 1& 1& 2& 1& 1& 1& 1& 1& 1& 2& 1& 1& 1& 1& 1& 1& 6 \\
\end{array}
$
\end{center}
\caption{The matrix $B_{3,4}$}
\label{fig:B34}
\end{figure}
Due to Lemma \ref{lem:partition},  we can partition
$\mathbb{P}_{n,p^e}$ as $l=p^{n-1}$ disjoint subsets
$K_1,\ldots, K_l$ such that
\begin{equation}
|K_h\cap \delta^{-1}(u)|=1.
\end{equation}
 for every $h\in [l]$  and $u\in \mathbb{P}_{n,p^{e-1}}$.
 For every $a,b\in[l]$, we denote by
$C_{ab}$ a matrix with rows and columns being labeled by elements of $K_a$  and  $K_b$, respectively. For every $u\in K_a$ and
  $v\in K_b$, the $(u,v)$ entry of $C_{ab}$  is  defined to be
the number of solutions of (\ref{equ:2_wise_sol}) in $\mathbb{P}_{n,p^e}$.
Let $C=(C_{ab})$ be a block matrix. Then
\begin{equation}
C\sim B_{n,p^e}
\end{equation}

\begin{example}
\label{example}
We expalin the above description by an example. Let $n=3,~p=2$ and $e=2$. Then
simple calculations show that $\mathbb{P}_{3,4}$  consists of the following equivalence classes
\begin{equation}
\begin{tabular}{ cccccccc }
$K_1:~~$ 001 & 010 & 011 & 100 & 101 & 110 & 111 \\
$K_2:~~$ 021 & 012 & 013 & 102 & 103 & 112 & 113 \\
$K_3:~~$ 201 & 210 & 211 & 120 & 121 & 130 & 131 \\
$K_4:~~$ 221 & 212 & 213 & 122 & 123 & 132 & 133 \\
 \end{tabular}
\end{equation}
where $l=4$ and the $i$th row is the  $i$th subset of $\mathbb{P}_{3,4}$ for every $i\in \{1,2,3,4\}$. Clearly, we have that
 $\mathbb{P}_{3,2}=
\{ \textit{001,~010,~011,~100,~101,~110,~111} \}$. It is trivial to verify that
$|\delta^{-1}(u)\cap K_h|=1$ for every $u\in \mathbb{P}_{3,2}$ and $h\in \{1,2,3,4\}$.
Figure \ref{fig:B34} depicts the matrix $B_{3,4}$, where the rows are columns are labeled by
 elements of
$\mathbb{P}_{3,4}$. Actually, the $B_{3,4}$ is a $\textit{4} \times \textit{4}$ block matrix, where each block is a square matrix of
order 7. More precisely, we have that
\begin{equation}
C_{aa}=
\begin{pmatrix}
6& 1& 1& 1& 1& 1& 1\\
1& 6& 1& 1& 1& 1& 1\\
1& 1& 6& 1& 1& 1& 1\\
1& 1& 1& 6& 1& 1& 1\\
1& 1& 1& 1& 6& 1& 1\\
1& 1& 1& 1& 1& 6& 1\\
1& 1& 1& 1& 1& 1& 6\\
\end{pmatrix}
\end{equation}
for every $a\in \{1,2,3,4\}$ and
\begin{equation}
C_{ab}=
\begin{pmatrix}
2& 1& 1& 1& 1& 1& 1 \\
1& 2& 1& 1& 1& 1& 1 \\
1& 1& 2& 1& 1& 1& 1 \\
1& 1& 1& 2& 1& 1& 1 \\
1& 1& 1& 1& 2& 1& 1 \\
1& 1& 1& 1& 1& 2& 1 \\
1& 1& 1& 1& 1& 1& 2 \\
\end{pmatrix}
\end{equation}
for every $a,b\in \{1,2,3,4\}$ such that $a\neq b$.
It is trivial to verify that
 the $(u,v)$ entry of $C_{aa}$ and $C_{ab}$  is  equal to
the number of solutions of (\ref{equ:2_wise_sol}) in $\mathbb{P}_{3,4}$
for every $u\in K_a$ and $ v\in K_b$.
\end{example}

\begin{figure}[H]
\begin{center}
$
\begin{array}
{c||ccccccc}
(001)& 3& 1& 1& 1& 1& 1& 1\\
(010)& 1& 3& 1& 1& 1& 1& 1\\
(011)& 1& 1& 3& 1& 1& 1& 1\\
(100)& 1& 1& 1& 3& 1& 1& 1\\
(101)& 1& 1& 1& 1& 3& 1& 1\\
(110)& 1& 1& 1& 1& 1& 3& 1\\
(111)& 1& 1& 1& 1& 1& 1& 3
\end{array}
$
\end{center}
\caption{The matrix $B_{3,2}$}
\label{fig:B32}
\end{figure}
Note that the matrix $B_{3,2}$ can be depicted by   Figure \ref{fig:B32}.
The following lemma shows that there are connections between $B_{n,p^e}$
and $B_{n,p^{e-1}}$. The connections are clear when we partition the set
$\mathbb{P}_{n,p^e}$  as the disjoint subsets $K_1,\ldots, K_l$, where $\delta(K_h)=\mathbb{P}_{n,p^{e-1}}$
for every $h\in [l]$.
%
%

\begin{lemma}
\label{lem:diag2}
Let $a,b\in[l]$ and $a\neq b$.
Let $u\in K_a$ and $v\in K_b$ be such that $\delta(u)=\delta(v)=w\in \mathbb{P}_{n,p^{e-1}}$. Then the $(u,v)$ entry of $C_{ab}$ is equal to
\begin{equation}
\label{eqn:diag1}
\frac{1}{\phi(p^e)}\left(p^{e(n-1)-1}-p^{(e-1)(n-1)}\right).
\end{equation}
\end{lemma}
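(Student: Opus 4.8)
The plan is to recognize that the $(u,v)$ entry of $C_{ab}$ is by definition the number of $w\in\mathbb{P}_{n,p^e}$ satisfying (\ref{equ:2_wise_sol}), i.e.\ exactly $b_{uv}$, so that the statement follows from Lemma~\ref{cor:B_entry} once I pin down the value of $\nu_p(\xi)$ for this particular pair $u,v$. The hypotheses $a\neq b$ and $\delta(u)=\delta(v)=w$ tell me that $u$ and $v$ are \emph{distinct} classes of $\mathbb{P}_{n,p^e}$ whose reductions modulo $p^{e-1}$ coincide as classes of $\mathbb{P}_{n,p^{e-1}}$. The whole argument therefore reduces to establishing $\nu_p(\xi)=e-1$ and then substituting into (\ref{eqn:buv}).

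To compute $\nu_p(\xi)$ I would first normalize the representatives. Replacing $v$ by $\mu v$ for a unit $\mu\in\mathbb{Z}_{p^e}^*$ multiplies each $\xi_{ij}=u_iv_j-u_jv_i$ by $\mu$, hence leaves every $\nu_p(\xi_{ij})$ unchanged and so leaves $\xi$ unchanged; this shows $\xi$ is an invariant of the pair of equivalence classes. I may therefore lift the unit $\lambda\in\mathbb{Z}_{p^{e-1}}^*$ witnessing $\delta(u)=\delta(v)$ to a unit of $\mathbb{Z}_{p^e}^*$ and choose representatives with $u_i\equiv v_i \bmod p^{e-1}$ for all $i$. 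Writing $u_i=v_i+p^{e-1}c_i$, a direct expansion gives $\xi_{ij}=u_iv_j-u_jv_i=p^{e-1}(c_iv_j-c_jv_i)$, so $p^{e-1}\mid \xi_{ij}$ for every pair $i,j$. Consequently $\alpha=\min_{i,j}\nu_p(\xi_{ij})\geq e-1$, whence $\nu_p(\xi)=\min(\alpha,e)\geq e-1$.

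For the matching upper bound I would invoke Lemma~\ref{lem:good_coeff_1}(b): because $u$ and $v$ are distinct equivalence classes, $\nu_p(\xi)<e$. Combining the two bounds forces $\nu_p(\xi)=e-1$, and in particular $\min(\nu_p(\xi),e-1)=e-1$. Substituting $\nu_p(\xi)=e-1$ into (\ref{eqn:buv}) and simplifying the two exponents via $(e-1)+e(n-2)=e(n-1)-1$ and $(e-1)+(e-1)(n-2)=(e-1)(n-1)$ yields precisely (\ref{eqn:diag1}).

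The only genuinely delicate point is the normalization step, and I would state it explicitly rather than leave it implicit: I must verify that $\xi$ depends only on the pair of classes (so that the clean identity $\xi_{ij}=p^{e-1}(c_iv_j-c_jv_i)$ is legitimate) and that a unit lift of $\lambda$ preserves the congruence $u_i\equiv v_i\bmod p^{e-1}$. Both are routine — the first because scaling by a unit merely rescales each $\xi_{ij}$, the second because reducing a unit lift modulo $p^{e-1}$ recovers $\lambda$ — but they are exactly what makes the value $\nu_p(\xi)=e-1$ available, and hence they are the crux of the computation.
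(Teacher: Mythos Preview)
Your proposal is correct and follows essentially the same approach as the paper: normalize the representatives so that $u$ and $v$ agree modulo $p^{e-1}$, observe that every $\xi_{ij}$ is then divisible by $p^{e-1}$, use the distinctness of the classes (Lemma~\ref{lem:good_coeff_1}(b)) to rule out $\nu_p(\xi)=e$, and substitute $\nu_p(\xi)=e-1$ into (\ref{eqn:buv}). The paper's version is terser---it fixes a unit coordinate $w_1$ to align both representatives and then simply asserts $\xi=p^{e-1}$---whereas you spell out the invariance of $\xi$ under unit scaling and the appeal to Lemma~\ref{lem:good_coeff_1}(b), but the content is the same.
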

\begin{proof}
Without loss of generality, we may suppose that $\gcd(w_1,p)=1$.
Since $\delta(u)=\delta(v)=w$, there are integers
$x_2,\ldots, x_n, y_2,\ldots, y_n\in \mathbb{Z}_p$ such that
\begin{equation}
\label{eqn:wuv}
\begin{split}
u&\sim (w_1,w_2+x_2 p^{e-1}, \ldots, w_n+x_n p^{e-1}),\\
v&\sim(w_1,w_2+y_2 p^{e-1}, \ldots, w_n+y_n p^{e-1}).
\end{split}
\end{equation}
Let $\xi$ be defined by (\ref{eqn:defxi}).
Due to (\ref{eqn:wuv}), it is trivial to verify that
$\xi=p^{e-1}$.
It follows that
(\ref{eqn:diag1}) is an immediate consequence of (\ref{eqn:buv}).
\end{proof}

\begin{lemma}
\label{lem:notdiag}
Let $a,b\in [l]$ be arbitrary. Let $u\in K_a$ and $v\in K_b$ be such that $\delta(u)=u^\prime\neq v^\prime= \delta(v)$.
Then the $(u,v)$ entry of $C_{ab}$ is equal to the product of
$p^{n-3}$ and the $(u^\prime, v^\prime)$ entry of
$B_{n,p^{e-1}}$.
\end{lemma}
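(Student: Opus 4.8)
The plan is to compute both quantities explicitly via the closed-form entry formula (\ref{eqn:buv}) of Lemma \ref{cor:B_entry} and then compare them. By the construction of the block matrix $C$, the $(u,v)$ entry of $C_{ab}$ is exactly $b_{uv}$, the $(u,v)$ entry of $B_{n,p^e}$, so I may work directly with (\ref{eqn:buv}). I would first fix representatives $u,v\in\mathbb{S}_{n,p^e}$ of the two classes, so that $u'=(u_1\bmod p^{e-1},\ldots,u_n\bmod p^{e-1})$ and $v'=(v_1\bmod p^{e-1},\ldots,v_n\bmod p^{e-1})$ are representatives of $\delta(u)$ and $\delta(v)$. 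Writing $\xi_{ij}=u_iv_j-u_jv_i$ and $\xi'_{ij}=u'_iv'_j-u'_jv'_i$, the key elementary observation is the congruence $\xi'_{ij}\equiv\xi_{ij}\bmod p^{e-1}$ for all $i,j$. I would also remark that $\nu_p(\xi)$ is independent of the chosen representatives, since replacing $u$ by $\lambda u$ with $\lambda$ a unit multiplies every $\xi_{ij}$ by $\lambda$ and leaves its $p$-valuation unchanged; the same holds for $\xi'$.

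The central step, which I expect to be the main obstacle, is to prove that the two valuations coincide, i.e. $\nu_p(\xi)=\nu_p(\xi')=:\nu$. Here the hypothesis $u'\neq v'$ is crucial: applying Lemma \ref{lem:good_coeff_1}(b) to the distinct classes $u',v'\in\mathbb{P}_{n,p^{e-1}}$ gives $\nu_p(\xi')<e-1$, so that the defining minimum $\alpha'=\min_{ij}\nu_p(\xi'_{ij})$ lies strictly below the modulus exponent $e-1$ and $\nu_p(\xi')=\alpha'$. I would then invoke the standard valuation fact that if $x\equiv y\bmod p^{e-1}$ and $\nu_p(x)<e-1$ then $\nu_p(x)=\nu_p(y)$, which lets me transfer valuations freely between $\xi_{ij}$ and $\xi'_{ij}$ as long as they stay below $e-1$. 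A two-sided argument then pins down $\alpha:=\min_{ij}\nu_p(\xi_{ij})$: the pair achieving $\alpha'$ (whose $\xi'$-valuation is $<e-1$) forces $\alpha\le\alpha'$, while any pair with $\nu_p(\xi_{ij})<\alpha'$ would, being below $e-1$, produce $\nu_p(\xi'_{ij})<\alpha'$, contradicting the minimality of $\alpha'$, so $\alpha\ge\alpha'$. Hence $\alpha=\alpha'<e-1<e$, whence $\nu_p(\xi)=\min(\alpha,e)=\alpha=\alpha'=\min(\alpha',e-1)=\nu_p(\xi')=\nu<e-1$.

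With $\nu<e-1$ in hand, the remainder is a routine substitution. Evaluating (\ref{eqn:buv}) at modulus $p^e$ and using $\min(\nu,e-1)=\nu$ expresses the $(u,v)$ entry of $C_{ab}$ as $\phi(p^e)^{-1}p^{\nu+(e-1)(n-2)}(p^{n-2}-1)$; evaluating the analogous formula at modulus $p^{e-1}$ and using $\min(\nu,e-2)=\nu$ (valid since $\nu\le e-2$) expresses the $(u',v')$ entry of $B_{n,p^{e-1}}$ as $\phi(p^{e-1})^{-1}p^{\nu+(e-2)(n-2)}(p^{n-2}-1)$. Taking the ratio, the factor $p^{n-2}-1$ cancels and, using $\phi(p^{e-1})/\phi(p^e)=1/p$, the surviving powers of $p$ collapse to $p^{(e-1)(n-2)-(e-2)(n-2)-1}=p^{(n-2)-1}=p^{n-3}$, which is precisely the claimed relation. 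I would conclude by noting that this recursion is used only for $e\ge 2$, so that the identity $\phi(p^{e-1})=p^{e-2}(p-1)$ employed above is valid throughout.
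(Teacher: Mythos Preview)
Your proposal is correct and follows essentially the same route as the paper: both arguments reduce to showing $\nu_p(\xi)=\nu_p(\xi')<e-1$ via the congruence $\xi_{ij}\equiv\xi'_{ij}\bmod p^{e-1}$ together with Lemma~\ref{lem:good_coeff_1}(b), and then invoke (\ref{eqn:buv}). Your write-up is slightly more explicit than the paper's in two places---you spell out the representative-independence of $\nu_p(\xi)$ and you actually carry out the ratio computation at the end rather than leaving it as ``immediate''---but the underlying ideas are identical.
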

\begin{proof}
Clearly, there are integers $x_1, \ldots, x_{n}, 	y_1,\ldots,
y_n\in \mathbb{Z}_p$  such that
\begin{equation}
\begin{split}
u&\sim(u^\prime_1+x_1 p^{e-1},u^\prime_2+x_2 p^{e-1}, u^\prime_3+x_3 p^{e-1}, \ldots, u^\prime_n+x_n p^{e-1}),\\
v&\sim (v^\prime_1+y_1 p^{e-1}, v^\prime_2+y_2p^{e-1}, v^\prime_3+y_3 p^{e-1}, \ldots,
v^\prime_n+y_n p^{e-1}).
\end{split}
\end{equation}
 Let
$
\eta_{ij}=
u^\prime_iv^\prime_j-u^\prime_jv^\prime_i
$
for every  $i,j\in [n]$. Let $\xi_{ij}$ be defined by (\ref{eqn:xij}). Then
$\xi_{ij}\equiv \eta_{ij}\bmod p^{e-1}$. Since $u\not\sim v$,  Lemma \ref{lem:good_coeff_1} implies that
$ \nu_p(\xi)<e
$ and $ \nu_p(\eta)<e-1$.
We claim that  $ \nu_p(\xi)=\nu_p(\eta).$
In fact, by Lemma \ref{lem:good_coeff_1}, there are integers
$\tilde{i},\tilde{j}\in [n]$ and $\hat{i}, \hat{j}\in [n]$ such that
$$\xi=\gcd(\xi_{\tilde{i}\tilde{j}}, p^e){\rm~and~}\eta=\gcd(\eta_{\hat{i}\hat{j}}, p^{e-1}).$$
Since $\xi_{\hat{i}\hat{j}}\equiv \eta_{\hat{i}\hat{j}}\bmod p^{e-1}$, we
must have that $\nu_p(\xi_{\hat{i}\hat{j}}) \leq \nu_p(\eta_{\hat{i}\hat{j}})$.
It follows that
$$\nu_p(\xi)=
\nu_p(\xi_{\tilde{i}\tilde{j}})\leq \nu_p(\xi_{\hat{i}\hat{j}}) \leq \nu_p(\eta_{\hat{i}\hat{j}})=\nu_p(\eta).$$
Similarly, we must have that $\nu_p(\eta)\leq \nu_p(\xi)$. Hence,
our claim holds. In particular, we have that $\nu_p(\xi)=\nu_p(\eta)<e-1$.
Now the lemma is an immediate consequence of (\ref{eqn:buv}).
\end{proof}

\begin{lemma}
\label{lem:cab}
Let $p$ be a prime and $n,e>1$ be integers. Then for any intgers $a,b\in [l]$, we have that
\begin{equation}
\label{eqn:cab}
C_{ab}=
\left\{
\begin{split}
p^{n-3}B_{n,p^{e-1}}-p^{(e-1)(n-2)-1} I\hspace{2.6cm} &{\rm~if~} a\neq b, \\
p^{n-3}B_{n,p^{e-1}}+\left(p^{e(n-2)}-p^{(e-1)(n-2)-1}\right)I \hspace{0.5cm} &{\rm~if~} a=b.
\end{split}
\right.
\end{equation}
\end{lemma}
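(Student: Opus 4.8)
The plan is to exploit the fact that, by Lemma~\ref{lem:partition}, the restriction of $\delta$ to each block $K_h$ is a bijection onto $\mathbb{P}_{n,p^{e-1}}$. This lets me index the rows of $C_{ab}$ by $u'=\delta(u)$ (for $u\in K_a$) and the columns by $v'=\delta(v)$ (for $v\in K_b$), so that $C_{ab}$ becomes a matrix indexed by $\mathbb{P}_{n,p^{e-1}}\times\mathbb{P}_{n,p^{e-1}}$ --- exactly the index set of $B_{n,p^{e-1}}$. Under this identification the asserted identity~(\ref{eqn:cab}) is a purely entrywise comparison, and I would prove it by splitting into the off-diagonal entries ($u'\neq v'$) and the diagonal entries ($u'=v'$).

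First I would dispose of the off-diagonal part. Whenever $\delta(u)=u'\neq v'=\delta(v)$, Lemma~\ref{lem:notdiag} states that the $(u,v)$ entry of $C_{ab}$ equals $p^{n-3}$ times the $(u',v')$ entry of $B_{n,p^{e-1}}$, and this holds for \emph{both} $a=b$ and $a\neq b$. Consequently $C_{ab}-p^{n-3}B_{n,p^{e-1}}$ is a diagonal matrix in either case, which is precisely the shape of~(\ref{eqn:cab}); it remains only to identify the scalar sitting on the diagonal.

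For the diagonal I would treat the two cases separately. If $a\neq b$, pick $u\in K_a$, $v\in K_b$ with $\delta(u)=\delta(v)=w$; then $u\neq v$, Lemma~\ref{lem:diag2} gives the $(u,v)$ entry of $C_{ab}$, and I subtract $p^{n-3}$ times the diagonal entry $b_{ww}$ of $B_{n,p^{e-1}}$, the latter obtained from~(\ref{eqn:buv}) with $e$ replaced by $e-1$ and $\nu_p(\xi)=e-1$. If $a=b$, then because $\delta|_{K_a}$ is injective, $\delta(u)=\delta(v)$ with $u,v\in K_a$ forces $u=v$, so the diagonal entry is the genuine diagonal entry $b_{uu}$ of $B_{n,p^e}$, which~(\ref{eqn:buv}) evaluates with $\nu_p(\xi)=e$. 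In both cases I would clear denominators using $\phi(p^e)=p^{e-1}(p-1)$ and $\phi(p^{e-1})=p^{e-2}(p-1)$ and collect powers of $p$; this yields the diagonal correction $-p^{(e-1)(n-2)-1}$ when $a\neq b$ and $p^{e(n-2)}-p^{(e-1)(n-2)-1}$ when $a=b$, as claimed.

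I expect no conceptual obstacle --- the substance is entirely in the $p$-adic exponent bookkeeping, where the factor $\phi(p^e)$ and the various powers $p^{e(n-1)}$, $p^{(e-1)(n-1)}$ and so on must be combined correctly. A useful consistency check along the way is that the difference between the two diagonal corrections is exactly $p^{e(n-2)}$, which should equal $\frac{1}{\phi(p^e)}\bigl(p^{e(n-1)}-p^{e(n-1)-1}\bigr)$, the gap between the $u=v$ entry of Lemma~\ref{cor:B_entry} and the $a\neq b$ diagonal entry of Lemma~\ref{lem:diag2}; and everything can be cross-checked against the matrices $C_{aa}$, $C_{ab}$ and $B_{3,2}$ in Example~\ref{example} (indeed $C_{ab}=B_{3,2}-I$ and $C_{aa}=B_{3,2}+3I$ there).
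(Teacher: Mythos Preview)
Your proposal is correct and follows essentially the same route as the paper's proof: both use Lemma~\ref{lem:notdiag} to handle all off-diagonal entries uniformly, then invoke Lemma~\ref{lem:diag2} (for $a\neq b$) and the diagonal case of~(\ref{eqn:buv}) (for $a=b$) to compute the diagonal correction by subtracting $p^{n-3}$ times the diagonal entry of $B_{n,p^{e-1}}$. Your write-up is in fact slightly more explicit than the paper's in justifying why the $a=b$ diagonal forces $u=v$ (via injectivity of $\delta|_{K_a}$) and in providing the numerical consistency check against Example~\ref{example}.
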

\begin{proof}
As depicted by Example \ref{example}, we partition the set $\mathbb{P}_{n,p^e}$ as
$l$ disjoint  subsets, where $l=p^{n-1}$.
Equation (\ref{eqn:buv}) shows that  the diagonal entries of
$B_{n,p^{e-1}}$ are all equal to
\begin{equation*}
\frac{1}{\phi(p^{e-1})}\left({p^{(e-1)(n-1)}-p^{(e-2)(n-1)}}\right).
\end{equation*}
When $a\neq b$,
 Lemma \ref{lem:diag2} shows that the diagonal entries of $C_{ab}$ are all equal to
 $$\frac{1}{\phi(p^e)}\left(p^{e(n-1)-1}-p^{(e-1)(n-1)}\right).
 $$
 Clearly, the difference between the diagonal entries is $p^{(e-1)(n-2)-1}$.
 Lemma \ref{lem:notdiag} shows that the non-diagonal entries of
 $C_{ab}$ are  $p^{n-3}$ times of those  of $B_{n,p^{e-1}}$.
It follows that
$$ C_{ab}=p^{n-3}B_{n,p^{e-1}}-p^{(e-1)(n-2)-1} I,$$
which is the first equality. Note that the diagonal entries of $C_{aa}$ are equal to
$$
\frac{1}{\phi(p^e)}\left(p^{e(n-1)}-p^{(e-1)(n-1)}\right).
$$
Similarly, we can prove the second  part of (\ref{eqn:cab}).
\end{proof}

\begin{lemma}\label{lem:eigenvalue_recursive_formula}
If $\lambda$ is an eigenvalue of $B_{n,p^{e-1}}$,
then $p^{2n-4}\cdot \lambda $ is an eigenvalue of $B_{n,p^e}$.
\end{lemma}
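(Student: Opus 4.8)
The plan is to construct an explicit eigenvector of the block matrix $C$ out of an eigenvector of $B_{n,p^{e-1}}$, and then invoke $C\sim B_{n,p^e}$ to transfer the eigenvalue. Abbreviate $M=B_{n,p^{e-1}}$ and $c=p^{(e-1)(n-2)-1}$. Lemma~\ref{lem:cab} tells us that every block of the $l\times l$ block matrix $C=(C_{ab})$, with $l=p^{n-1}$, is one of two fixed matrices: $C_{ab}=p^{n-3}M-cI$ whenever $a\neq b$, and $C_{aa}=p^{n-3}M+(p^{e(n-2)}-c)I$. Equivalently, collecting terms,
\[ C=J_l\otimes\big(p^{n-3}M-cI\big)+p^{e(n-2)}I, \]
which already exhibits the $J_l$-structure that drives the argument.

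Let $w$ be any eigenvector of $M$ with $Mw=\lambda w$, and form the stacked block vector $\widetilde w=J_{l\times 1}\otimes w$ consisting of $l$ identical copies of $w$. I would compute the $a$-th block of $C\widetilde w$ directly from Lemma~\ref{lem:cab}: it equals $C_{aa}w+\sum_{b\neq a}C_{ab}w$. Because each block of $C$ depends only on whether $a=b$ and not on the particular indices, this sum is the same for every $a$, namely
\[ C_{aa}w+(l-1)\big(p^{n-3}M-cI\big)w=\big(l\,p^{n-3}\lambda+p^{e(n-2)}-lc\big)\,w. \]
Hence $\widetilde w$ (which is nonzero since $w\neq 0$) is an eigenvector of $C$, and the entire lemma reduces to evaluating the scalar factor.

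The decisive simplification is that the two additive constants cancel exactly. Since $l=p^{n-1}$ and $c=p^{(e-1)(n-2)-1}$, their product has exponent $(n-1)+(e-1)(n-2)-1=(n-2)+(e-1)(n-2)=e(n-2)$, so $lc=p^{e(n-2)}$ and therefore $p^{e(n-2)}-lc=0$. What remains is $l\,p^{n-3}\lambda=p^{(n-1)+(n-3)}\lambda=p^{2n-4}\lambda$, giving $C\widetilde w=p^{2n-4}\lambda\,\widetilde w$. As $C$ and $B_{n,p^e}$ share the same eigenvalues, $p^{2n-4}\lambda$ is an eigenvalue of $B_{n,p^e}$. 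I do not anticipate a genuine obstacle here: essentially all of the content has been front-loaded into Lemma~\ref{lem:cab}, and the only substantive steps are recognizing the stacked vector $J_{l\times 1}\otimes w$ as the correct eigenvector and verifying the exponent identity $lc=p^{e(n-2)}$ that makes the constant terms vanish.
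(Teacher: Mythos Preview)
Your proof is correct and follows essentially the same approach as the paper: both use Lemma~\ref{lem:cab} to recognize that $C-p^{e(n-2)}I$ is the $l\times l$ block matrix $J_l\otimes(p^{n-3}B_{n,p^{e-1}}-cI)$, then act on the stacked vector $J_{l\times 1}\otimes w$ for an eigenvector $w$ of $B_{n,p^{e-1}}$, and verify the identity $l\cdot p^{n-3}\lambda - lc + p^{e(n-2)} = p^{2n-4}\lambda$. Your write-up is in fact more explicit than the paper's, which compresses the eigenvector construction into a one-line ``it follows that.''
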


\begin{proof}
Due to Lemma \ref{lem:cab},
 $B_{n,p^e}
-p^{e(n-2)} I$  is an $l\times l$ block matrix, where each block
 is equal to
 $$C_{12}=p^{n-3}B_{n,p^{e-1}}-p^{(e-1)(n-2)-1} I.$$
 It follows that  $\mu=p^{n-3}\lambda-p^{(e-1)(n-2)-1}$ is an eigenvalue of $B_{n,p^e}$.
Thus
$$l \mu+p^{e(n-2)}=p^{n-1} \mu+p^{e(n-2)}=p^{2n-4}\lambda$$ is an eigenvalue of $B_{n,p^e}$.
\end{proof}

\begin{lemma}\label{lem:small_eig}
$\lambda_{e+1}=
p^{e(n-2)}$ is an eigenvalue of $B_{n,p^e}$ of
multiplicity
at least
$(p^{n-1}-1)\theta_{n,p^{e-1}}.$
\end{lemma}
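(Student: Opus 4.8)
The plan is to exploit the explicit block structure of $B_{n,p^e}$ furnished by Lemma \ref{lem:cab}. Recall that $C\sim B_{n,p^e}$ is an $l\times l$ block matrix with $l=p^{n-1}$, each block being a square matrix of order $\theta_{n,p^{e-1}}$ — this is the common size of the pieces $K_h$, since $\delta$ restricts to a bijection from each $K_h$ onto $\mathbb{P}_{n,p^{e-1}}$. Writing $X=p^{n-3}B_{n,p^{e-1}}-p^{(e-1)(n-2)-1}I$ for the common off-diagonal block, Lemma \ref{lem:cab} says that every off-diagonal block equals $X$ while every diagonal block equals $X+p^{e(n-2)}I$. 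Setting $M=\theta_{n,p^{e-1}}$, I would therefore first record the clean identity
\begin{equation}
C=J_l\otimes X+p^{e(n-2)}\,I_{lM}.
\end{equation}

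The key observation is that the rank deficiency of $J_l$ produces a large eigenspace on which the coupling block $X$ is invisible. I would fix a basis $z_1,\ldots,z_{l-1}$ of the kernel of $J_l$, i.e.\ the space of vectors in $\mathbb{R}^l$ whose entries sum to zero, which has dimension $l-1=p^{n-1}-1$. For any such $z_i$ and any $w\in\mathbb{R}^M$, the block action of the tensor product gives $(J_l\otimes X)(z_i\otimes w)=(J_lz_i)\otimes(Xw)=0$, so that
\begin{equation}
C(z_i\otimes w)=p^{e(n-2)}(z_i\otimes w).
\end{equation}
Hence every vector $z_i\otimes w$ is an eigenvector of $C$ with eigenvalue $\lambda_{e+1}=p^{e(n-2)}$, irrespective of $X$ — and therefore irrespective of the internal spectrum of $B_{n,p^{e-1}}$.

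To count the multiplicity I would fix a basis $w_1,\ldots,w_M$ of $\mathbb{R}^M$ and consider the family $\{z_i\otimes w_j:1\le i\le l-1,\ 1\le j\le M\}$. Since tensor products of two linearly independent families are linearly independent, these $(l-1)M$ vectors are independent and all lie in the $\lambda_{e+1}$-eigenspace. This produces $(p^{n-1}-1)\theta_{n,p^{e-1}}$ independent eigenvectors, which is exactly the claimed lower bound; as $C$ and $B_{n,p^e}$ share the same spectrum (because $C\sim B_{n,p^e}$), the conclusion transfers back to $B_{n,p^e}$.

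The argument is largely mechanical once the block identity is in place; the only points requiring care are the bookkeeping of the block order $M=\theta_{n,p^{e-1}}$ and the verification that the $(l-1)M$ tensors are genuinely independent. The main conceptual step — the reason the bound comes out exactly — is recognizing that restricting $J_l$ to its kernel annihilates the off-diagonal coupling entirely, so the whole $(l-1)$-dimensional ``difference'' subspace contributes the single eigenvalue $p^{e(n-2)}$ with full multiplier $M$. No information about the recursive matrix $B_{n,p^{e-1}}$ enters here, which is precisely why this eigenvalue family is disjoint in origin from the eigenvalues produced by Lemma \ref{lem:eigenvalue_recursive_formula}.
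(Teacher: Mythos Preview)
Your argument is correct and is essentially the same as the paper's: the paper's explicit eigenvectors (1 at $u\in K_a$, $-1$ at $v\in K_l$ with $\delta(u)=\delta(v)$) are precisely your tensors $z_i\otimes w_j$ for the particular choices $z_i=e_a-e_l\in\ker J_l$ and $w_j$ a standard basis vector of $\mathbb{R}^M$. Your tensor identity $C=J_l\otimes X+p^{e(n-2)}I_{lM}$ is a tidy repackaging of Lemma~\ref{lem:cab}, but the eigenspace it exhibits and the multiplicity count are identical to the paper's.
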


\begin{proof}
Let $a\in [l-1]$. For every $u\in K_a$ and $v\in K_l$ such that
$\delta(u)=\delta(v)$, let
$w$
be a vector with coordinates being labeled by elements of  $\mathbb{P}_{n,p^e}$
such that the coordinate labeled by
$u$ is  1, the coordinate  labeled by $v$ is $-1$ and all the other coordinates are 0. Due to Lemma
\ref{lem:cab}, we have that
$B_{n,p^e}\cdot w=p^{e(n-2)}\cdot w$. Since there are
$$(l-1)\cdot |K_a|
=(p^{n-1}-1)\cdot \theta_{n,p^{e-1}}$$
choices for $w$ when $a$ is taken over $[l-1]$ and $u$ is taken over $K_a$ for every $a$.
Clearly, all the $w$'s are linearly
independent. The eigenvalue is of multiplicity at least $(p^{n-1}-1)\theta_{n,p^{e-1}}$.
\end{proof}

\begin{theorem}\label{pro:eig_B_p_e_n}
{\em (Prime Power Case)}
Let $m=p^e$ for a prime $p$  and positive integers $e$ and $n$. Then the eigenvalues of
$B_{n,m}$ and their multiplicities are as follows ($s\in \{3,\ldots,e+1\}$):
\begin{table}[H]
\begin{center}
\em
\doublespacing
     \begin{tabular}{   | l | l |}
     \hline
     Eigenvalue & Multiplicity  \\ \hline
     $\lambda_1 = p^{2(e-1)(n-2)}\cdot \theta_{n-1,p}^2 $ & $d_1 = 1$  \\ \hline
$\lambda_2 =   p^{(2e-1)(n-2)}$ & $d_2 =  \theta_{n,p}-1$   \\
\hline
  $\lambda_s=  p^{(2e+1-s)(n-2)}$ & $d_s=(p^{n-1}-1)\theta_{n,p^{s-2}}$    \\ \hline
     \end{tabular}
\caption{The eigenvalues of the matrix $B_{n,m}$}
\label{tab:eig}
 \end{center}
\end{table}
\end{theorem}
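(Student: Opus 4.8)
The plan is to prove the theorem by induction on $e$, using the two recursion lemmas (Lemma \ref{lem:eigenvalue_recursive_formula} and Lemma \ref{lem:small_eig}) to produce eigenvalues and a dimension count to pin the multiplicities down exactly. For the base case $e=1$ the table reduces to the two rows $\lambda_1=\theta_{n-1,p}^2$ with $d_1=1$ and $\lambda_2=p^{n-2}$ with $d_2=\theta_{n,p}-1$ (the range $\{3,\dots,e+1\}$ is empty), which is precisely Alon's result quoted in the introduction, with the columns of $R_d$ supplying the $\theta_{n,p}-1$ independent eigenvectors for $p^{n-2}$. I would verify directly that the exponents in the table specialize correctly at $e=1$.

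For the inductive step I assume the table holds for $e-1$ and transfer the spectrum upward. By Lemma \ref{lem:eigenvalue_recursive_formula}, each eigenvalue $\lambda'$ of $B_{n,p^{e-1}}$ yields the eigenvalue $p^{2n-4}\lambda'=p^{2(n-2)}\lambda'$ of $B_{n,p^e}$; substituting the inductive values $\lambda'_1,\dots,\lambda'_e$ and simplifying the exponents (e.g. $p^{2(n-2)}p^{(2e-1-s)(n-2)}=p^{(2e+1-s)(n-2)}$) shows these images are exactly the rows $\lambda_1,\dots,\lambda_e$ of the current table. To control multiplicities rather than mere membership, I would use the eigenvector lift implicit in the proof of Lemma \ref{lem:eigenvalue_recursive_formula}: by Lemma \ref{lem:cab} the matrix $C\sim B_{n,p^e}$ decomposes as $p^{e(n-2)}I+J_l\otimes C_{12}$ with $l=p^{n-1}$ and $C_{12}=p^{n-3}B_{n,p^{e-1}}-p^{(e-1)(n-2)-1}I$, so any eigenvector $\mathbf{v}$ of $B_{n,p^{e-1}}$ for $\lambda'$ lifts to the eigenvector $\mathbf{1}_l\otimes\mathbf{v}$ of $C$ for $p^{2n-4}\lambda'$. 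Linearly independent $\mathbf{v}$'s lift to linearly independent eigenvectors, so the multiplicity of $\lambda_s$ in $B_{n,p^e}$ is at least its inductive multiplicity $d_s$ for $s=1,\dots,e$, while Lemma \ref{lem:small_eig} gives the remaining eigenvalue $\lambda_{e+1}=p^{e(n-2)}$ with multiplicity at least $d_{e+1}=(p^{n-1}-1)\theta_{n,p^{e-1}}$.

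To close the argument I would check two arithmetic facts. First, distinctness: for $n\ge 3$ the value $\lambda_1$ carries the factor $\theta_{n-1,p}^2$, which is coprime to $p$ and exceeds $1$, whereas $\lambda_2,\dots,\lambda_{e+1}$ are genuine prime powers with strictly decreasing exponents $(2e-1)(n-2)>(2e-2)(n-2)>\dots>e(n-2)$; hence the $e+1$ listed eigenvalues are pairwise distinct (the case $n=2$ collapses everything to a single value and is checked directly). Consequently the lower bounds attach to distinct eigenvalues and may be summed. Second, the dimension count: from (\ref{eqn:theta}) one has $\theta_{n,p^{j}}=p^{(j-1)(n-1)}\theta_{n,p}$, and telescoping gives $\sum_{s=1}^{e+1}d_s=\theta_{n,p}+(p^{n-1}-1)\sum_{j=1}^{e-1}\theta_{n,p^{j}}=p^{(e-1)(n-1)}\theta_{n,p}=\theta_{n,p^e}$, the order of $B_{n,p^e}$. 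Since the sum of the lower bounds already equals the size of the matrix, every bound is forced to be an equality and no further eigenvalues can exist, which yields the table.

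I expect the main obstacle to be upgrading the ``is an eigenvalue'' conclusions of Lemmas \ref{lem:eigenvalue_recursive_formula} and \ref{lem:small_eig} into \emph{exact} multiplicities. The resolution is to combine the eigenvector lift (for the lower bounds) with the exact count $\sum_s d_s=\theta_{n,p^e}$, which squeezes every bound to equality so that no spectral analysis beyond the two recursion lemmas is required; a secondary subtlety is the distinctness check that separates the ``large'' eigenvalue $\lambda_1$ from the family of pure prime-power eigenvalues $\lambda_2,\dots,\lambda_{e+1}$.
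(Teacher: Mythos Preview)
Your proposal is correct and follows essentially the same approach as the paper: lift the spectrum of $B_{n,p^{e-1}}$ via Lemma~\ref{lem:eigenvalue_recursive_formula}, adjoin the new eigenvalue $\lambda_{e+1}$ via Lemma~\ref{lem:small_eig}, and close with the dimension count $\sum_s d_s=\theta_{n,p^e}$. Your write-up is in fact somewhat more careful than the paper's own proof: you make explicit the eigenvector lift $\mathbf{1}_l\otimes\mathbf v$ (which the paper's statement of Lemma~\ref{lem:eigenvalue_recursive_formula} does not record, though its proof does), and you add the distinctness check for $\lambda_1,\dots,\lambda_{e+1}$ that the paper omits but that is needed for the squeeze argument to determine each multiplicity separately rather than just their sum.
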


\begin{proof}
The matrix $B_{n,p}$ has  two eigenvalues
$\mu_1=\theta_{n-1,p}^2$ and $\mu_2=p^{n-2}$, which are  of multiplicity $d_1=1$ and
$d_2=\theta_{n,p}-1$, respectively.  Lemma
\ref{lem:eigenvalue_recursive_formula} shows that
\begin{itemize}
\item[--] $\lambda_1=p^{(2n-4)(e-1)} \mu_1$
is an eigenvalue of $B_{n,p^e}$ of multiplicity at least $d_1$;
\item[--]   $\lambda_2=p^{(2n-4)(e-1)} \mu_2$ is an eigenvalue of $B_{n,p^e}$ of multiplicity at least $d_2$.
\end{itemize}
 Lemma  \ref{lem:small_eig} shows that  $\mu_s=p^{(s-1)(n-2)}$ is an eigenvalue of
$B_{n,p^{s-1}}$ of multiplicity
at least
$d_s=(p^{n-1}-1)\theta_{n,p^{s-2}}$ for every $s\in \{3,\ldots, e+1\}$.
Due to Lemma \ref{lem:eigenvalue_recursive_formula},
\begin{itemize}
\item[--]
$\lambda_s=p^{(2n-4)(e-s+1)}\mu_s$  is an eigenvalue of
$B_{n,p^e}$
of  multiplicity at least $d_s$ for $s\in \{3,\ldots, e+1\}.$
\end{itemize}
The sum of the multiplicities of  $\lambda_1,\ldots, \lambda_{e+1}$ is at least
$$\sum_{s=1}^{e+1} d_s=1+\theta_{n,p}-1+\sum_{s=3}^{e+1}\left(p^{n-1}-1\right)\theta_{n,p^{s-2}}=
\theta_{n,p^e}$$
 Hence,  the multiplicity  of $\lambda_s$ must be  $d_s=(p^{n-1}-1)\theta_{n,p^{s-2}}$ for
every $s\in[e+1]$.
\end{proof}

\section{The General Case}

In this section, we determine the eigenvalues of $B_{n,m}$ for a general integer $m$. Firstly, we show a tensor
lemma on the matrix $B_{n,m}$.

\begin{lemma}\label{pro:tensorp}
If $m=m_1m_2$ for two coprime integers $m_1$ and $m_2$, then we have that
\begin{equation}
B_{n,m}\sim B_{n,m_1} \otimes
B_{n,m_2}.
\end{equation}
\end{lemma}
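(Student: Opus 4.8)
The plan is to exhibit an explicit bijection $\Phi:\mathbb{P}_{n,m}\to \mathbb{P}_{n,m_1}\times\mathbb{P}_{n,m_2}$ and then verify that, under the induced relabeling of rows and columns, the matrix $B_{n,m}$ coincides with the tensor product $B_{n,m_1}\otimes B_{n,m_2}$. Since $\gcd(m_1,m_2)=1$, the Chinese Remainder Theorem gives a ring isomorphism $\mathbb{Z}_m\cong\mathbb{Z}_{m_1}\times\mathbb{Z}_{m_2}$, and applying it coordinatewise yields a bijection $\mathbb{Z}_m^n\cong\mathbb{Z}_{m_1}^n\times\mathbb{Z}_{m_2}^n$. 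First I would check that this bijection restricts to one on the coprime-to-modulus vectors: $\gcd(u_1,\ldots,u_n,m)=1$ holds if and only if $\gcd(u_1\bmod m_1,\ldots,u_n\bmod m_1,m_1)=1$ and the analogous condition mod $m_2$, because $\gcd(\cdot,m)=1$ splits across the coprime factors. Thus $\mathbb{S}_{n,m}$ maps bijectively onto $\mathbb{S}_{n,m_1}\times\mathbb{S}_{n,m_2}$.

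The next step is to push this down to equivalence classes, which is where the main care is needed. I would show that $u\sim u'$ in $\mathbb{Z}_m^n$ if and only if $(u\bmod m_1)\sim(u'\bmod m_1)$ and $(u\bmod m_2)\sim(u'\bmod m_2)$. The forward direction is immediate by reducing the scalar $\lambda\in\mathbb{Z}_m^*$ modulo each factor, noting $\mathbb{Z}_m^*\cong\mathbb{Z}_{m_1}^*\times\mathbb{Z}_{m_2}^*$. For the reverse direction, if $u\equiv\lambda_1 u'\bmod m_1$ and $u\equiv\lambda_2 u'\bmod m_2$ for units $\lambda_1,\lambda_2$, then CRT produces a single $\lambda\in\mathbb{Z}_m^*$ with $\lambda\equiv\lambda_i$ mod $m_i$, and that $\lambda$ witnesses $u\sim u'$. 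This equivalence of relations is exactly what is required for $\Phi$ to descend to a well-defined bijection on the quotient $\mathbb{P}_{n,m}=\mathbb{S}_{n,m}/\!\sim$.

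With $\Phi$ in hand, the final step is the entrywise identity. The crucial observation is that for the inner product, $\langle u,v\rangle\equiv 0\bmod m$ if and only if $\langle u,v\rangle\equiv 0\bmod m_1$ and $\langle u,v\rangle\equiv 0\bmod m_2$, again by CRT, and reduction commutes with the inner product so $\langle u,v\rangle\bmod m_i=\langle u\bmod m_i,\, v\bmod m_i\rangle$. Hence $a_{uv}^{(m)}=a^{(m_1)}_{\bar u_1\bar v_1}\cdot a^{(m_2)}_{\bar u_2 \bar v_2}$ where $\bar u_i,\bar v_i$ denote the components of $\Phi(u),\Phi(v)$. This multiplicative splitting of the $A$-entries is precisely the statement that $A_{n,m}\sim A_{n,m_1}\otimes A_{n,m_2}$ under $\Phi$. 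Then, using the mixed-product rule $(A\otimes B)(C\otimes D)=(AC)\otimes(BD)$ together with $(A\otimes B)^t=A^t\otimes B^t$, I would conclude
\begin{equation}
B_{n,m}\sim (A_{n,m_1}\otimes A_{n,m_2})(A_{n,m_1}\otimes A_{n,m_2})^t
=(A_{n,m_1}A_{n,m_1}^t)\otimes(A_{n,m_2}A_{n,m_2}^t)
=B_{n,m_1}\otimes B_{n,m_2}.
\end{equation}

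The main obstacle is bookkeeping rather than depth: I must be careful that the single permutation relabeling the rows and columns of $B_{n,m}$ is the same as the one induced by $\Phi$ on both sides (so the $\sim$ relation, defined as simultaneous row/column permutation, genuinely applies), and that the tensor-indexing convention for $\Phi$ matches the block structure $(a_{ij}\cdot B)$ in the paper's definition of $\otimes$. The subtle point deserving the most attention is the reverse direction of the equivalence-of-relations claim, since a naive argument might only produce scalars mod each factor without guaranteeing they assemble into a global unit; invoking $\mathbb{Z}_m^*\cong\mathbb{Z}_{m_1}^*\times\mathbb{Z}_{m_2}^*$ via CRT is what makes this work. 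Everything else reduces to the standard CRT dictionary applied coordinatewise and the elementary mixed-product property of tensor products.
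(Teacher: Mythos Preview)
Your proposal is correct and follows essentially the same approach as the paper's own proof: both construct the CRT bijection between $\mathbb{P}_{n,m}$ and $\mathbb{P}_{n,m_1}\times\mathbb{P}_{n,m_2}$, verify that the orthogonality condition splits across the factors to obtain $A_{n,m}\sim A_{n,m_1}\otimes A_{n,m_2}$, and then apply the mixed-product and transpose identities for tensor products to pass to $B_{n,m}$. The only cosmetic difference is that the paper defines its bijection in the direction $\mathbb{P}_{n,m_1}\times\mathbb{P}_{n,m_2}\to\mathbb{P}_{n,m}$ and appeals to the cardinality identity $\theta_{n,m}=\theta_{n,m_1}\theta_{n,m_2}$ for bijectivity, whereas you work in the opposite direction and are slightly more explicit about the two-sided compatibility of $\sim$ with CRT.
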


\begin{proof}
Let $\pi: \mathbb{P}_{n,m_1}\times \mathbb{P}_{n,m_2}
\rightarrow \mathbb{P}_{n,m}$
be the mapping defined by $\pi(u,v)=w$,
where
\begin{equation}
w_i\equiv u_i\bmod m_1 {\rm~and~} w_i\equiv v_i\bmod m_2
 \end{equation}
 for every $i\in[n]$. The $\pi$ is well-defined.
In fact, let $w=\pi(u,v)$ and $
w^\prime=\pi(u^\prime, v^\prime)$, where
$u, u^\prime\in \mathbb{P}_{n,m_1}$ and $v, v^\prime\in \mathbb{P}_{n,m_2}$.
If $u\sim u^\prime$ and $v\sim v^\prime$, then there are integers $\lambda\in \mathbb{Z}_{m_1}^*$ and $\mu\in \mathbb{Z}_{m_2}^*$ such that
 \begin{equation}
u^\prime_i\equiv \lambda u_i\bmod m_1 {\rm ~and~} v^\prime_i\equiv \mu v_i\bmod m_2.
  \end{equation}
for every $i\in [n]$.
Let $\delta$ be an integer such that
\begin{equation}
\delta\equiv \lambda \bmod m_1{\rm~and~}\delta \equiv \mu \bmod m_2
\end{equation}
Due to (30), (31), (32), we have that $w^\prime _i\equiv \delta  w_i \bmod m$
for every $i\in [n]$. Hence, $w^\prime\sim w$.

It is easy to see that the mapping $\pi$ is bijective and $\theta_{n,m}=\theta_{n,m_1}\theta_{n,m_2}$.
Hence, $\pi$ is bijective. Let
$w,w^\prime$ be defined as above. Clearly,
$\langle w,w^\prime\rangle\equiv 0 \bmod m$ if and only if
$\langle u,u^\prime\rangle\equiv 0 \bmod m_1$ and
$\langle v,v^\prime\rangle\equiv 0 \bmod m_2$.
Hence, the $(w,w^\prime)$ entry of $A_{n,m}$ is equal to 1 if and only if the
$(u,u^\prime)$ entry of
$A_{n,m_1}$ and the $(v,v^\prime)$ entry of $A_{n,m_2}$ are both equal to 1. Hence,
$A_{n,m}\sim A_{n,m_1}\otimes A_{n,m_2}$. Thus

\begin{equation*}
\begin{split}
B_{n,m}
&=A_{n,m}A_{n,m}^t\\
&\sim (A_{n,m_1}\otimes A_{n,m_2}) (A_{n,m_1}\otimes A_{n,m_2})^t\\
&=  (A_{n,m_1}\otimes A_{n,m_2}) (A_{n,m_1}^t\otimes A_{n,m_2}^t)\\
&=  (A_{n,m_1}A_{n,m_1}^t)\otimes (A_{n,m_2} A_{n,m_2}^t)\\
&= B_{n,m_1}B_{n,m_2}
\end{split}
\end{equation*}
which is the expected result.
\end{proof}

As an immediate corollary of Lemma 4.1, we have

\begin{corollary}\label{thm:B_m_n_decomposition}
Let $m=m_1\cdots m_r=p_1^{e_1}\cdots p_r^{e_r}$ for distinct primes $p_1,\ldots, p_r$ and
positive integers
$e_1,\ldots, e_r$,
where $m_s=p_s^{e_s}$ for every $s\in [r]$.
Then
$B_{n,m}\sim B_{n,m_1} \otimes \cdots \otimes B_{n,m_r}$.
\end{corollary}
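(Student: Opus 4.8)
The plan is to prove the statement by induction on the number $r$ of distinct prime factors, using the two-factor Lemma \ref{pro:tensorp} as the engine of the inductive step. The base case $r=1$ is vacuous, since then $m=m_1$ and $B_{n,m}=B_{n,m_1}$. For the inductive step I would assume the claim for $r-1$ factors and set $M=m_1\cdots m_{r-1}=p_1^{e_1}\cdots p_{r-1}^{e_{r-1}}$, so that $m=M\cdot m_r$. Because $p_1,\ldots,p_r$ are distinct primes, $M$ and $m_r$ are coprime, so Lemma \ref{pro:tensorp} applies and gives $B_{n,m}\sim B_{n,M}\otimes B_{n,m_r}$. The inductive hypothesis then supplies $B_{n,M}\sim B_{n,m_1}\otimes\cdots\otimes B_{n,m_{r-1}}$, and it remains to push this equivalence through the outer tensor factor and reassociate.

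Two auxiliary facts about the relation $\sim$ and the tensor product make this final step go through, and spelling them out is the only real content beyond bookkeeping. First, $\sim$ is compatible with $\otimes$ on the left factor: if $A\sim A'$, say $A'=PAP^{-1}$ for a permutation matrix $P$, then $A'\otimes B=(P\otimes I)(A\otimes B)(P\otimes I)^{-1}$, and $P\otimes I$ is again a permutation matrix, so $A'\otimes B\sim A\otimes B$. Second, the tensor product as defined is associative, so that $(A\otimes B)\otimes C$ and $A\otimes(B\otimes C)$ agree up to the natural reindexing of the Cartesian product of index sets, which is itself a permutation and hence an instance of $\sim$. Granting these, I would apply the first fact with $A=B_{n,M}$, $A'=B_{n,m_1}\otimes\cdots\otimes B_{n,m_{r-1}}$, and $B=B_{n,m_r}$ to obtain
\[
B_{n,M}\otimes B_{n,m_r}\sim (B_{n,m_1}\otimes\cdots\otimes B_{n,m_{r-1}})\otimes B_{n,m_r},
\]
and then invoke associativity to rewrite the right-hand side as $B_{n,m_1}\otimes\cdots\otimes B_{n,m_r}$.

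Finally, since $\sim$ is clearly an equivalence relation, in particular transitive, chaining $B_{n,m}\sim B_{n,M}\otimes B_{n,m_r}\sim B_{n,m_1}\otimes\cdots\otimes B_{n,m_r}$ yields the desired conclusion. I do not expect a genuine obstacle here: the statement is flagged as an immediate corollary, and the whole argument is an induction whose only nontrivial input is Lemma \ref{pro:tensorp}. The one point worth making explicit — easy to overlook behind a one-line ``immediate corollary'' — is precisely the compatibility of $\sim$ with $\otimes$ together with the associativity of $\otimes$, which is what licenses transporting the $(r-1)$-fold decomposition across the remaining tensor factor.
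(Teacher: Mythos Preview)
Your proposal is correct and matches the paper's approach: the paper simply states the corollary as an immediate consequence of Lemma~\ref{pro:tensorp} without further argument, and your induction on $r$ (together with the routine compatibility and associativity observations) is precisely how one unpacks that ``immediate''. There is nothing to add.
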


\begin{lemma}\label{lem:eigenvector_form}
Let $\lambda$ be an eigenvalue of $B_{n,p^e}$
of  multiplicity $d$. Then there are $d$ eigenvectors (column vectors)
${\bf u}_1,\ldots, {\bf u}_d$ of
$B_{n,p^e}$  with eigenvalue $\lambda$ such that
\begin{equation}
({\bf u}_1,\ldots, {\bf u}_d)\sim
\begin{pmatrix}
I_d\\ *
\end{pmatrix}.
\end{equation}
\end{lemma}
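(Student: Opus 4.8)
The plan is to exploit the fact that $B_{n,p^e}=A_{n,p^e}A_{n,p^e}^t$ is a real symmetric matrix, and then to reduce the statement to a purely linear-algebraic normalization of a basis of the eigenspace. First I would record that, being symmetric, $B_{n,p^e}$ is diagonalizable and its geometric and algebraic multiplicities coincide; hence the eigenspace $E_\lambda=\{{\bf x}:B_{n,p^e}{\bf x}=\lambda{\bf x}\}$ has dimension exactly $d$. Write $N=\theta_{n,p^e}$ for the order of $B_{n,p^e}$, so that $E_\lambda$ is a $d$-dimensional subspace of the $N$-dimensional column space whose coordinates are indexed by the elements of $\mathbb{P}_{n,p^e}$.

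Next I would choose any basis ${\bf v}_1,\ldots,{\bf v}_d$ of $E_\lambda$ and assemble it into the $N\times d$ matrix $M=({\bf v}_1,\ldots,{\bf v}_d)$. Since the columns are linearly independent, $M$ has rank $d$, so it possesses $d$ linearly independent rows; let $r_1,\ldots,r_d\in\mathbb{P}_{n,p^e}$ index such rows and let $P$ be the $d\times d$ submatrix of $M$ formed by them, which is then invertible. Setting $({\bf u}_1,\ldots,{\bf u}_d)=MP^{-1}$ produces $d$ new vectors, each a linear combination of ${\bf v}_1,\ldots,{\bf v}_d$ and therefore still lying in $E_\lambda$, i.e.\ still an eigenvector of $B_{n,p^e}$ with eigenvalue $\lambda$. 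By construction the submatrix of $MP^{-1}$ on the rows $r_1,\ldots,r_d$ equals $PP^{-1}=I_d$.

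Finally I would apply the permutation of $\mathbb{P}_{n,p^e}$ that moves the rows $r_1,\ldots,r_d$ into the first $d$ positions. This permutation carries $({\bf u}_1,\ldots,{\bf u}_d)$ to a matrix whose top $d\times d$ block is $I_d$ and whose remaining rows are unconstrained, which is exactly the asserted normal form $\begin{pmatrix}I_d\\ *\end{pmatrix}$.

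I do not expect a serious obstacle: the essential content is the standard fact that any $d$-dimensional subspace admits a basis in reduced form relative to a suitable choice of $d$ coordinates. The only points deserving care are the justification that the eigenspace attains the full dimension $d$ (for which the symmetry of $B_{n,p^e}$ is the key ingredient), and the observation that the relation $\sim$ is here to be read as equality after a permutation of the rows (the coordinate labels), the columns being the eigenvectors themselves.
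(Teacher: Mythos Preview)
Your argument is correct. Since $B_{n,p^e}=A_{n,p^e}A_{n,p^e}^t$ is real symmetric, the eigenspace $E_\lambda$ has dimension exactly $d$, and the remainder is the standard Gaussian normalization of a basis: choose $d$ independent rows, invert the corresponding $d\times d$ minor on the right, and permute those rows to the top. The resulting columns are still independent elements of $E_\lambda$, hence eigenvectors, with the required $\begin{pmatrix}I_d\\ *\end{pmatrix}$ shape.

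The paper, however, takes a genuinely different, constructive route. It argues by induction on $e$: for $e=1$ it quotes Alon's explicit eigenvectors $J_{l\times 1}$ and the columns of $R_d$; for $e>1$ it uses the block decomposition of $B_{n,p^e}$ from Lemma~\ref{lem:cab} to manufacture eigenvectors for $\lambda_1,\ldots,\lambda_e$ as $J_{p^{n-1}\times 1}\otimes {\bf u}_i$ from those of $B_{n,p^{e-1}}$, and reads off the eigenvectors for $\lambda_{e+1}$ directly from the proof of Lemma~\ref{lem:small_eig}, checking in each case that the $I_d$-on-top form holds. Your approach is much more economical and works for any real symmetric matrix, with no appeal to the specific spectral analysis of Section~3; the paper's approach, by contrast, actually exhibits the eigenvectors explicitly. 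For the only use made of this lemma (establishing linear independence of the tensor products in Theorem~\ref{pro:eig_B_m_n}), your argument is entirely sufficient. Your remark that $\sim$ must here be read as a row permutation is also well taken, since the paper's formal definition of $\sim$ applies only to square matrices; both proofs tacitly adopt this reading.
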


\begin{proof}
If $e = 1$, then the single column of $J_{l\times 1}$ is an eigenvector of $B_{n,p}$
with eigenvalue $\theta_{n-1,p}^2$, where
$l=\theta_{n,p}$. Furthermore, there are $d=l-1$ eigenvectors ${\bf u}_1,\ldots, {\bf u}_d$
with eigenvalue $p^{n-1}$ such that
$$
\left({\bf u}_{1},\ldots, {\bf u}_{d}\right)=
\begin{pmatrix}
I_d \\
-J_{1\times d}
\end{pmatrix}.
$$

We give the proof for $e > 1$ by induction. Lemma 3.1 shows that $B_{n,p^e}$ has $e + 1$
 eigenvalues. We
prove for each eigenvalue. Firstly, due to the proof of Lemma 3.12, we have that
\begin{itemize}
\item [--] the eigenvectors of $B_{n,p^e}$ with eigenvalue $\lambda_{e+1}$ have the form (33).
\end{itemize}

Secondly, for every $s\in [e]$, let ${\bf u}_1,\ldots, {\bf u}_{d_s}$ be the eigenvectors of
$B_{n,p^{e-1}}$ with eigenvalue $\lambda_s\cdot p^{-2(n-2)}$.
Due to the induction hypothesis, we have that
$$
\left({\bf u}_{1},\ldots, {\bf u}_{d_s}\right)=
\begin{pmatrix}
I_{d_s} \\
*
\end{pmatrix}.
$$
Let ${\bf v}_i=J_{p^{n-1}\times 1}\otimes {\bf u}_i$ for every $i\in [d_s]$. Lemma 3.10 shows that
${\bf v}_1, \ldots, {\bf v}_{d_s}$ are eigenvectors of $B_{n,p^e}$
with eigenvalue $\lambda_s$. Thus, it is easy to see that
\begin{itemize}
\item[--] the eigenvectors of $B_{n,p^e}$ with eigenvalue $\lambda_s$ have the form (33) for
every $s\in [e]$.
\end{itemize}

Finally, by the induction, we have that (33) holds for $B_{n,p^e}$.
\end{proof}

 \begin{theorem}\label{pro:eig_B_m_n}
{\em (General Case)}
Let $m=m_1\cdots m_r=p_1^{e_1}\cdots p_r^{e_r}$ for distinct primes $p_1,\ldots, p_r$ and
positive integers
$e_1,\ldots, e_r$,
where $m_s=p_s^{e_s}$ for every $s\in [r]$.
Let $\lambda_{s}$ be an eigenvalue of $B_{n,m_s}$ of
multiplicity $d_{s}$ for every
$s\in[r]$. Then $\lambda_1\cdots\lambda_r$
is an  eigenvalue of  $B_{n,m}$ of multiplicity
$d_{1}\cdots d_{r}$.
\end{theorem}

\begin{proof}
Lemma 4.2 shows there are eigenvectors ${\bf u}_{s1},\ldots, {\bf u}_{sd_s}$
of $B_{n,m_s}$ with eigenvalue $\lambda_s$ such that
\begin{equation}
\left({\bf u}_{s1},\ldots, {\bf u}_{sd_s}\right)=
\begin{pmatrix}
I_{d_s} \\
*
\end{pmatrix}.
\end{equation}
Due to Lemma 2.1, we have that
\begin{equation*}
\begin{split}
B_{n,m} \left({\bf u}_{1f_1}\otimes \cdots\otimes
{\bf u}_{rf_r}\right)&=
(B_{n,m_1}\cdot {\bf u}_{1f_1}) \otimes
\cdots
(B_{n,m_r}\cdot {\bf u}_{rf_r})=
(\lambda_1\cdots \lambda_r) \cdot
({\bf u}_{1f_1}\otimes \cdots\otimes
{\bf u}_{rf_r}),
\end{split}
\end{equation*}
where $1\leq f_s\leq d_s$ for every $s\in [r]$. Hence,
$({\bf u}_{1f_1}\otimes \cdots\otimes
{\bf u}_{rf_r})$ is an eigenvector of
$B_{n,m}$  with eigenvalue $\lambda_1\cdots \lambda_r$. Due to (34),
it is not hard to see that the following eigenvectors
$$\{\left({\bf u}_{1f_1}\otimes \cdots\otimes
{\bf u}_{rf_r}\right): 1\leq f_s\leq d_s{\rm~for~}s\in [r]\}$$
are linearly independent. Hence,
$\lambda_1\cdots\lambda_r$ is an eigenvalue of $B_{n,m}$ of
of multiplicity at least $d_1\cdots d_r$.

Let $\lambda_{sj}$ be the eigenvalue of $B_{n,m_s}$
 of multiplicity $d_{sj}$ for every
$s\in[r]$ and $j\in[e_s+1]$.
Then  $\lambda_{1j_1}\cdots\lambda_{rj_r}$
is an  eigenvalue of $B_{n,m}$ of  multiplicity
at least
$d_{1j_1}\cdots d_{rj_r}$.
Theorem \ref{pro:eig_B_p_e_n} shows that
 $$\sum_{j=1}^{e_s+1}d_{sj}
=\theta_{n,m_s}$$ for every $s\in[r]$.
It follows that
\begin{equation*}
\sum_{j_1\in [e_1+1],\ldots, j_r\in [e_r+1]}d_{1j_1}\cdots d_{rj_r}
=\sum_{j_1=1}^{e_1+1}d_{1j_1}\cdot
\sum_{j_2=1}^{e_2+1}d_{2j_2} \cdots  \sum_{j_r=1}^{e_r+1}d_{rj_r}
=
\prod_{s=1}^r\theta_{n,m_s}=\theta_{n,m}.
\end{equation*}
Hence, the multiplicity of $\lambda_1\cdots \lambda_r$
cannot be greater than $d_1\cdots d_r$. In other words, the multiplicity of
$\lambda_1\cdots \lambda_r$ must be exactly $d_1\cdots d_r$.
\end{proof}

\end{document}